\newtheorem{theorem}{theorem}[subsection]
\newcommand{\be}{\begin{equation}}
\newcommand{\ee}{\end{equation}}
\newcommand{\lan}{\langle}
\newcommand{\ran}{\rangle}
\newcommand{\Tr}{\mathrm{Tr}}
\newcommand{\mO}{\mathcal{O}}
\newcommand{\wt}{\widetilde}
\newcommand{\wb}{\widebar}
\newcommand{\mP}{{\cal{P}}}
\newcommand{\mE}{{\cal{E}}}
\newcommand{\mR}{{\cal{R}}}
\newcommand{\mI}{\mathbb{I}}
\newcommand{\mH}{\mathcal{H}}
\definecolor{grey}{rgb}{.5,.5,.5}
\definecolor{bluegreen}{rgb}{0,.5,.5}
\definecolor{darkgreen}{rgb}{0,.5,0}
\newcommand{\beq}{\begin{equation}}
\newcommand{\eeq}{\end{equation}}
\def\({\left(}
\def\){\right)}
\begin{document}

\title{Holographic Quantum Error Correction and the Projected Black Hole Interior}
\author{Ahmed Almheiri}
\affiliation{Institute for Advanced Study,  Princeton, NJ 08540, USA}
\emailAdd{almheiri@ias.edu}
\abstract{ The quantum error correction interpretation of AdS/CFT establishes a sense of fluidity to the bulk/boundary dictionary. We show how this property can be utilized to construct a dictionary for operators behind horizons of pure black holes. We demonstrate this within the context of the SYK model with pure black hole microstates obtained via projecting out a single side of the thermofield double (and perturbed versions thereof). Assuming an erasure subsystem code for the duality between the eternal black hole and the thermofield double, this projection results in a rewiring of the dictionary so as to map the interior operators to the remaining boundary in a determinable way. We find this  dictionary to be sensitive to the implemented projection in a manner reminiscent of previous state-dependent constructions of the black hole interior. We also comment on how the fluidity of the dictionary can be used to transfer information between two black holes connected by a wormhole, relating the ideas of entanglement wedge reconstruction and the Hayden-Preskill decoding criterion.\\
\\
\begin{center}
{\it Dedicated to the memory of Joseph Polchinski }
\end{center}
}

\maketitle

\section{Introduction}

The enigmatic nature of the black hole interior has received much attention in recent years due to the conflict between semi-classical expectations of a smooth horizon and the treatment of black holes as quantum systems with a finite density of states \cite{Almheiri:2012rt}. These confusions regarding the interior are fundamentally linked to the problem of information loss \cite{Hawking:1976ra}, and it is generally believed that a solution of the former might inform us on the latter.

In the context of the AdS/CFT conjecture, where these paradoxes become sharpest, these issues manifest themselves in the difficulty of establishing a dictionary between interior operators and CFT observables \cite{Almheiri:2013hfa}. A plethora of proposals have been put forward which try to ensure a smooth horizon for an infalling observer \cite{Papadodimas:2012aq, Papadodimas:2013jku, Papadodimas:2015jra, Verlinde:2012cy, Verlinde:2013uja, Verlinde:2013qya, Maldacena:2013xja, Nomura:2012cx, Nomura:2013gna, Nomura:2014woa}. A more or less common strategy of these proposals is to begin with some pure state black hole in AdS/CFT, along with the boundary dual of the bulk algebra of operators outside but near the horizon and then try and find a corresponding boundary algebra which mimics the semi-classical algebra of operators behind the black hole horizon. The goal is to find such an algebra with the condition that they ensure a smooth horizon in the considered pure state. However, primarily because these conditions are enforced without prior knowledge of the actual physics of the interior they tend to create ambiguities that run afoul of the standard rules of quantum mechanics \cite{Harlow:2014yoa, Marolf:2015dia}.

The new framework for understanding the AdS/CFT dictionary as a Quantum Error Correcting (QEC) code \cite{Almheiri:2014lwa} has not yet been utilized to address these issues. This framework was proposed as a resolution of an apparent inconsistency between subregion-subregion duality (SSD) and the properties of operator algebras in quantum field theories. In particular, SSD seems to indicate the existence of non-trivial bulk operators which commute with all local operators in the CFT on a given time slice, in contradiction with Schur's lemma (or the ``time-slice axiom" in continuum QFT \cite{Streater:1989vi, Haag:1992hx}) that they  must then be proportional to the identity. Viewed through the lens of QEC, this conflict is resolved by interpreting the SSD operator identities as subspace statements holding within some code subspace $\mH_{code}$. For example, one can show that if a logical operator (one that acts within the code subspace) $\widetilde{\mO}$ satisfies
\begin{align}
\mP_{code} \big[ \widetilde{\mO}, X_E \big] \mP_{code} = \mP_{code} \big[ \widetilde{\mO}^\dagger, X_E \big] \mP_{code} = 0
\end{align}
for all operators $X_E$ supported on some subsystem $E$ and where $\mP_{code}$ is the projector on the code subspace $\mH_{code}$, then there exists an operator supported on the complement of $E$, denoted by $\widebar{E}$,  such that
\begin{align}
\widetilde{\mO} \mP_{code} = \mO_{\widebar{E}} \mP_{code}, \ \widetilde{\mO}^\dagger \mP_{code} = \mO_{\widebar{E}}^\dagger \mP_{code}
\end{align}
This describes a version of a QEC usually called Operator Algebra Quantum Error Correction (OAQEC) \cite{beny2007generalization, beny2007quantum}.  This framework has also aided in understanding two other important aspects of the AdS/CFT duality. The first is the so-called entanglement wedge reconstruction proposal which states that the density matrix of a boundary subregion $\widebar{E}$ is sufficient to reconstruct the entire entanglement wedge $W_{\widebar{E}}$, the bulk region composed of the union of all spacelike slices bounded by the Ryu-Takayanagi (RT) or Hubeny-Rangamani-Takayanagi (HRT) surface and the boundary subregion itself \cite{Czech:2012bh, Wall:2012uf, Headrick:2014cta}. The proven statement is that any bulk operator with support within $W_{\wb{E}}$ has a dual boundary operator supported purely on $\wb{E}$ \cite{Jafferis:2015del, Dong:2016eik}. Furthermore as shown in \cite{Harlow:2016vwg}, this framework reproduces  the RT \cite{Ryu:2006bv} (HRT \cite{Hubeny:2007xt}) formula  for computing the von Neumann entropy of the region $\widebar{E}$, along with its associated quantum corrections (bulk EFT entanglement entropy) \cite{Faulkner:2013ana},
\begin{align}
S(\rho_{\widebar{E}}) = {A \over 4 G_N} + S(\rho_{W_{\bar{E}}}).
\end{align}
Given the success of this framework it behooves us to apply it to the context of the black hole interior.

The setting in which we will implement these ideas to the black hole interior will be within the duality between AdS$_2$ gravity and (a subsector of) the SYK model. The SYK model is a system of $N$ Majorana fermions randomly coupled via the $q$-local Hamiltonian \cite{Kitaevtalk, Maldacena:2016hyu}, 
\begin{align}
H = (-1)^{q/2}\sum_{i_1 ... i_q}^N J_{i_1...  i_q} \psi_{i_1} ... \psi_{i_q}
\end{align}
for $q \ll N$. This system has been found to reproduce many  features of gravity in AdS$_2$ including the pattern of conformal symmetry breaking at low energies \cite{Almheiri:2014cka}, as well as saturating the bound on chaos typical of commutators in black hole backgrounds \cite{Maldacena:2015waa}. A particularly interesting and controlled setting in which the reconstruction of the black hole interior can be addressed is the Kourkoulo-Maldacena (KM) construction of pure black hole microstates in the SYK model \cite{Kourkoulou:2017zaj}\footnote{See also  \cite{Krishnan:2017txw} for earlier consideration of microstates in an SYK-like model.} (see also \cite{Goel:2018ubv} for further constructions). The KM construction is as follows. First one defines a set of states $| B_s \ran $ which satisfy
\begin{align}
\left( \psi^{2k-1} - i s_k \psi^{2 k} \right) | B_s \ran = 0 \iff  S_k | B_s \ran = s_k | B_s \ran
\end{align}
where $S_k = 2 i \psi^{2k - 1} \psi^{2k}$ is a spin operator with eigenvalues $s_k = \pm1$. This set of states spans the entire Hilbert space of SYK of dimension $2^{N/2}$. One can then obtain black holes of effective temperature $\beta$ by evolving these states in Euclidean time 
\begin{align}
| B^\beta_s \rangle  = e^{-{\beta \over 2} H} | B_s \rangle
\end{align}
which produces an overcomplete basis of black hole microstates of temperature $\beta$. Within the low energy analysis, the geometry of these black holes looks like that of an eternal black hole except that one boundary is excised by an end-of-the-world brane (EWB) which falls into the black hole. Moreover, these states can be prepared by projecting on the thermofield double (TFD) with the CPT invariant state $|B_s\ran$,
\begin{align}
{}_L\lan B_s |\beta \ran_{LR} = | B_s^\beta \ran_{R}  
\end{align}
where
\begin{align}
| \beta \ran_{LR} = {1 
\over \sqrt{Z_\beta}}\sum_{E} e^{- {\beta \over 2} H} |E \ran_L | E \ran_R
\end{align}
Therefore the dual of acting with the projection operator is the insertion of the EWB which falls into the eternal black hole.

It is this latter construction that we will use to find the dictionary for the interior of the pure black hole microstates. The idea is to begin with the eternal black hole, with or without anti-time-ordered shockwaves in the interior, viewed as an erasure subsystem code of \cite{Harlow:2016vwg} describing the dictionary between the left and right exteriors and their corresponding boundaries, and then to study how this dictionary is modified by the projection on the left boundary. We will study this first using a toy model involving random tensors and then prove some general theorems about when and which interior operators may be reconstructed after such projections. We will ultimately find that a necessary and sufficient condition for the reconstructability of an interior subalgebra is given by
\begin{align}
\mP_{code} \big[ \widetilde{\mO}, P_L^s \big] \mP_{code} = \mP_{code} \big[ \widetilde{\mO}^\dagger, P_L^s \big] \mP_{code} = 0
\end{align}
for an interior operator $\wt{\mO}$ and with left projection $P_L^s = | B_s \ran_L \lan B_s |$, and which guarantees the existence of an operator on the right SYK $O_R^s$ such that
\begin{align}
{}_L \lan B_s| \wt{\mO} | \beta \ran_{LR} = O_R^s \ {}_L\lan B_s|\beta \ran_{LR} = O_R^s | B_s^\beta \ran_{R}  
\end{align}
where the superscript $s$ is there to indicate that this operator depends on the particular projection, $P_L^s$. We will discuss the extent of this state-dependence and draw connections to the previous such proposals for the interior. We will discuss how the main reason that this construction avoids the pitfalls of the previous proposals is that the typicality of the state is not the determining factor to the question of the nature of the horizon.

Finally, we will discuss how the fluidity of the dictionary provides a bulk  mechanism for transferring information between two boundary SYK systems dual to an eternal black hole by means of evaporating one system into the other. The information will be transferred in the sense that a message deposited into one boundary will end up in the entanglement wedge of the other, and whose state can then be read off using entanglement wedge reconstruction. We will see that the protocol is very similar to the situation of an evaporating black hole that has reached the Page time, where further infalling messages can be decoded from the Hawking radiation using the Hayden-Preskill protocol upon allowing the black hole to release a few more Hawking quanta \cite{Hayden:2007cs}.

\section{Pure SYK Black Hole Microstates}
\label{projectedmicrostates}

\subsection{KM Construction of Atypical Microstates}

We begin by reviewing the analysis of KM in constructing the states $|B_s \ran$ dual to pure black hole microstates of effective inverse temperature $\beta$ with an end-of-the-world brane (EWB) capping off the spacetime deep inside the interior \cite{Kourkoulou:2017zaj}. This dual bulk description is deduced from the form of the fermion bilinear correlation functions studied in the low energy limit $1 \ll \beta J \ll N$ and working to leading order in the $1/N$ expansion.

Consider the diagonal correlation functions $\lan B_s^\beta | \psi^i (t_1) \psi^i (t_2) | B_s^\beta \ran $ which can be written in terms of the TFD as
\begin{align}
\lan \beta | \Big[ | B_s \ran_L\lan B_s| \otimes  \psi^i (t_1) \psi^i (t_2)  \Big] | \beta \ran 
\end{align}
where the fermions are operators belonging to the right SYK. Since the projections $ | B_s \ran_L\lan B_s| $ for different $s$ form a complete basis, the sum over $s$ just reproduces the thermal expectation value
\begin{align}
\sum_s \lan \beta | \Big[ | B_s \ran_L\lan B_s| \otimes  \psi^i (t_1) \psi^i (t_2)  \Big] | \beta \ran = \Tr \left[ e^{- \beta H}  \psi^i (t_1) \psi^i (t_2) \right] \label{completeness}
\end{align}
At large $N$, the SYK model has an emergent $\mO(N)$ flavor symmetry, of which a particularly interesting subgroup is the flip group 
\begin{align}
\psi^k \rightarrow (-1)^{k-1} \psi^k
\end{align}
In thinking about the doubled system, we denote the flip group as the one which acts identically on both SYKs. This group implements spin flips and therefore relates the different eigenstates $| B_s \ran$ of the spin operator $S_k = 2 i \psi^{2k -1} \psi^{2 k}$. The TFD state is invariant under this subgroup. In particular, both the Hamiltonian and the maximally entangled state in the energy basis are individually invariant, which becomes manifest when written in the $|B_s \ran$ basis:
\begin{align}
\sum_s |B_s \ran_L \lan B_s | \times \sum_E |E\ran_L |E\ran_R = \sum_s | B_s \ran_L | B_s \ran_R
\end{align}
Therefore, we find that the diagonal correlation functions
\begin{align}
\lan \beta | \Big[ | B_s \ran_L\lan B_s| \otimes  \psi^i (t_1) \psi^i (t_2)  \Big] | \beta \ran 
\end{align}
are invariant under $| B_s \ran \rightarrow | B_{s'} \ran$, and hence
\begin{align}
\lan \beta | \Big[ | B_s \ran_L\lan B_s| \otimes  \psi^i (t_1) \psi^i (t_2)  \Big] | \beta \ran  &= 2^{-N/2}\sum_{s} \lan \beta | \Big[ | B_s \ran_L\lan B_s| \otimes  \psi^i (t_1) \psi^i (t_2)  \Big] | \beta \ran \\
&= 2^{-N/2} \Tr \left[ e^{- \beta H}  \psi^i (t_1) \psi^i (t_2) \right]
\end{align}
Therefore diagonal correlation functions are identical to thermal correlation functions at large $N$. In the low energy limit this attains the conformal form
\begin{align}
\lan B_s^\beta | \psi^i (t_1) \psi^i (t_2) | B_s^\beta \ran \sim {1 \over \left[ {\beta J \over \pi} \sinh {\pi (t_1 - t_2) \over \beta}\right]^{2 \Delta}}
\end{align}
One deduces from this that the bulk geometry is just AdS$_2$. We will see next that the off-diagonal correlators reveal the presence of the EWB.

\begin{figure}[t]
\begin{center}
\includegraphics[height=5cm]{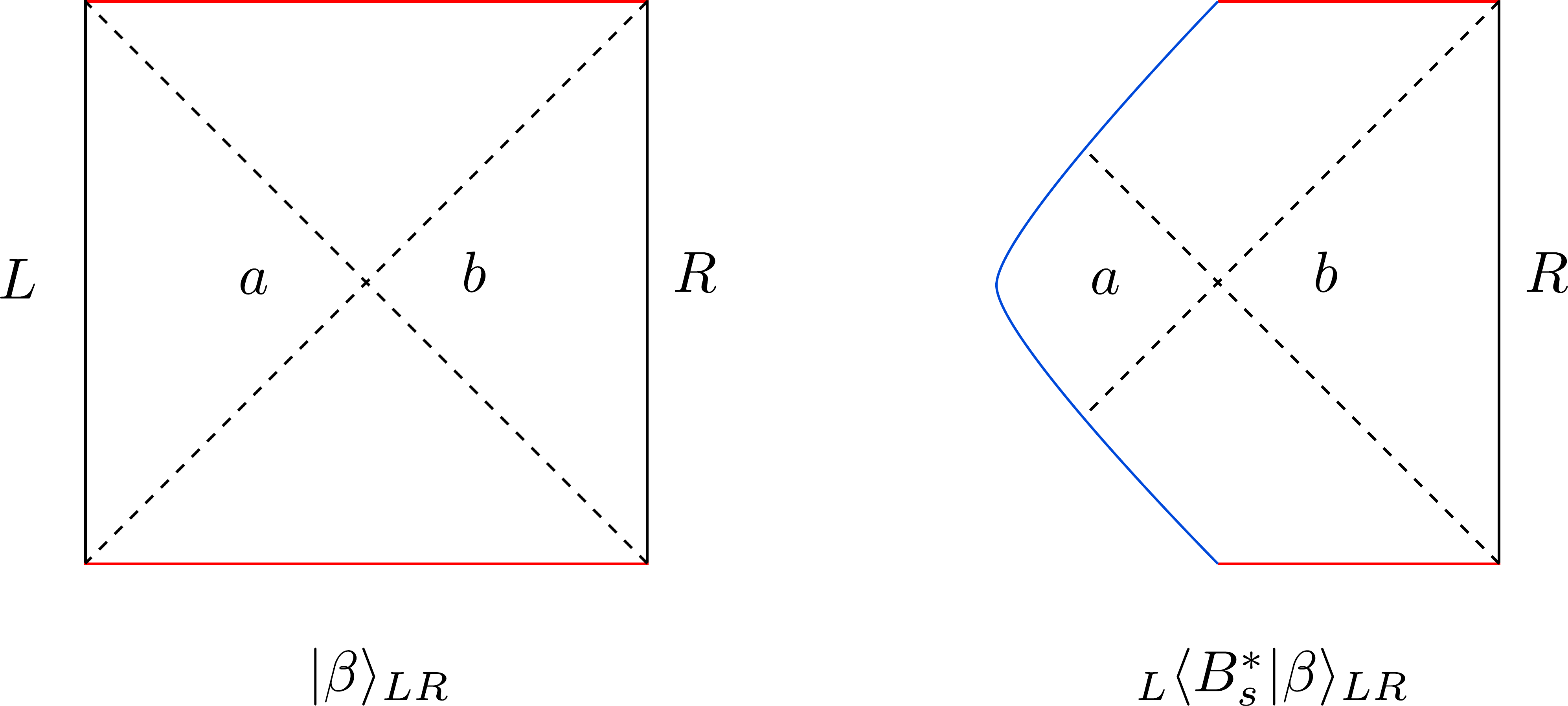}
\caption{The diagram on the left is the standard eternal black hole spacetime dual to the thermofield double state. The projected state on the right is dual to a black hole in a pure state with an end-of-the-world (EWB) brane cutting off the spacetime in the interior. The EWB can be viewed as a UV insertion on the left boundary which then proceeds to fall into the black hole.}\label{sec2projtfd}
\end{center}
\end{figure}

To compute the off-diagonal correlators, say  $\psi^1(t_1) \psi^2(t_2)$, one first notes that while this product is not invariant under the spin group, the following operator is invariant
\begin{align}
S_1 \otimes \psi^1(t_1) \psi^2(t_2)
\end{align}
Moreover, recall that $S_k | B_s \ran = s_k | B_s \ran$. To compute the off-diagonal correlator, one considers
\begin{align}
\lan \beta | \Big[ | B_s \ran_L\lan B_s| \otimes \mI_R \Big] \Big[ S_1 \otimes \psi^1(t_1) \psi^2(t_2) \Big]  \Big[ | B_{\bar{s} }\ran_L\lan B_{\bar{s}}| \otimes \mI_R \Big] | \beta \ran
\end{align}
which simplifies to
\begin{align}
s_1 \lan B_s | \psi^1(t_1) \psi^2(t_2)  | B_s \ran \ \delta_{s \bar{s}}
\end{align}
From the flip group it is clear that the unsimplified correlation function is invariant under the replacement of $B_s \rightarrow B_{s'}$. Again, this invariance means we can sum over the spins $s$ and $\bar{s}$ removing the projectors all together to obtain
\begin{align}
s_1 \lan B_s | \psi^1(t_1) \psi^2(t_2)  | B_s \ran &= 2^{-N/2} \ 2 i \lan \beta | \Big[ \psi^1(0) \psi^2(0) \otimes \psi^1(t_1) \psi^2(t_2) \Big]   | \beta \ran \\
&= 2^{-N/2} \ 2 i \lan \beta |  \psi^1(0) \otimes \psi^1(t_1)     | \beta \ran \lan \beta |   \psi^2(0) \otimes \psi^2(t_2)    | \beta \ran
\end{align}
where the last line is the large $N$ result. This is the product of two left-right diagonal fermion correlation functions, each of which at low energy can be deduced from the single sided correlator by taking a single $t \rightarrow t + i \beta /2$
\begin{align}
\lan \beta |  \psi^1(0) \otimes \psi^1(t_1)     | \beta \ran \sim {1 \over \left[ {\beta J \over \pi} \cosh {\pi t_1  \over \beta}\right]^{2 \Delta}} 
\end{align}
This indicates that the state $|B_s^\beta \ran$ contains the insertion of a high energy operator localized at the single point $\tau  = \beta/2$ on the Euclidean AdS$_2$ boundary.  When continued into Lorentzian time this insertion starts off at $t = 0$ near the left SYK boundary and falls into the black hole. This is shown in figure \ref{sec2projtfd}.

Finally, we review the overcompleteness of this set of states \cite{Kourkoulou:2017zaj}. These black hole microstates are in one to one correspondence with the states $| B_s \ran$, which number at $2^{N/2}$, where $N$ is the number of Majorana fermions in the SYK model. Using the techniques above, we have the (not normalized) overlap
\begin{align}
\lan B_s^\beta | B_s^\beta \ran \equiv  \lan \beta | \Big[ | B_s \ran_L\lan B_s| \otimes  \mI  \Big] | \beta \ran  = 2^{-N/2} Z(\beta)
\end{align} 
Therefore, expanding these states in the energy basis we get
\begin{align}
| B_s^\beta \ran = {1 \over 2^{-N/4} \sqrt{ Z(\beta)}} \sum_{\alpha} e^{- \beta E_\alpha /2} c_\alpha^s | E_\alpha \ran
\end{align}
Note that the sum runs only over half of the energy eigenstates since $(-1)^F = \prod_{k = 1}^{N/2} S_k$ commutes with Hamiltonian; the energy eigenstates which appear in this expression are those that live in the same spin parity sector as $| B_s \ran$. As can be checked numerically \cite{Kourkoulou:2017zaj}, we can assume these coefficients to be random complex numbers which on average satisfy
\begin{align}
c_\alpha^s c_\alpha^{s *} =   2^{- N/2 + 1}    \delta_{s s'}  \label{cav}
\end{align}
This is the expected behavior assuming the states $| B _s \ran$ are  random states in the energy basis. We can compute the overlap of different black hole microstates by assuming the coefficients $c_{\alpha}^s$ to be random unitary matrices\footnote{We thank D. Stanford for discussions on this point.}. With this assumption, we confirm \ref{cav} and also find that on average
\begin{align}
\sqrt{\Big|\lan B_s^\beta | B_{s'}^\beta \ran\Big|^2 } = {\sqrt{2 Z(2 \beta)} \over Z(\beta)} \sqrt{1 - {2^{- N/2} Z^2(\beta) \over Z(2 \beta)}}
\end{align}
which is exponentially small in $N$. This vanishes as $\beta \rightarrow 0$ as required. This non-vanishing overlap is a sign of the `over' in overcompleteness. The `completeness' is shown in equation \ref{completeness}.

\subsection{More Typical Microstates}

The fact that the off-diagonal correlators are not down by powers of $1/N$ is indicative of these microstates being special. We now describe how to prepare more typical microstates where, at the level of two point functions, all but the diagonal correlators are small. These will be black holes with long throats supported by a large number of out-of-time-order (OTO) shockwaves in the interior.

Let $W_L$ represent a left sided unitary operator which creates a series of OTO shockwaves when acting on $| \beta \ran$. In particular, the state
\begin{align}
| W \beta \ran_{LR} \equiv W_L | \beta \ran_{LR}
\end{align}
is dual to the long wormhole supported by OTO shockwaves. Let's assume that the operators $W_L$ are invariant under the diagonal spin group, which implies that $| W \beta \ran_{LR}$ is invariant as well. We want to check that the state 
\begin{align}
| B_s^W \ran_R \equiv {}_L \lan B_s | W \beta \ran
\end{align}
is dual to a single sided black hole with a long throat by computing the diagonal and off-diagonal correlation functions. Just as before, the diagonal correlators can be written as
\begin{align}
{}_R \lan B_s^W | \psi^i(t_1) \psi^i(t_2)  | B_s^W \ran_R  = \lan W \beta | \Big[ | B_s \ran_L\lan B_s| \otimes \psi^i(t_1) \psi^i(t_2)   \Big] | W \beta \ran
\end{align}
and which are invariant under $s \rightarrow s'$. Note, since $W_L$ is unitary, this invariance implies that this expression is equal to 
\begin{align}
\lan \beta | \Big[ | B_s \ran_L\lan B_s| \otimes \psi^i(t_1) \psi^i(t_2)   \Big] |  \beta \ran
\end{align}
which is given by the thermal expectation value as shown before, and so
\begin{align}
{}_R \lan B_s^W | \psi^i(t_1) \psi^i(t_2)  | B_s^W \ran_R \ \propto \ \Tr \left[ e^{\beta H} \psi^i(t_1) \psi^i(t_2)  \right]
\end{align}
The analysis of the off-diagonal  correlator is the same, we have
\begin{align}
s_1 \lan  B^W_s | \psi^1(t_1) \psi^2(t_2)  | B^W_s \ran &= \nonumber \\ 
\lan W \beta | \Big[ | B_s \ran_L\lan B_s| \otimes \mI_R \Big] \Big[ S_1 \otimes  \psi^1(t_1) \psi^2(t_2)  \Big]  & \Big[ | B_{\bar{s}} \ran_L\lan B_{\bar{s}}|  \otimes \mI_R \Big] | W \beta \ran
\end{align}
and which is also invariant under the Flip group. Therefore, by the same arguments above is given by
\begin{align}
& =\lan W \beta | \Big[ 2 i \psi^1(0) \psi^2(0)  \psi^1(t_1) \psi^2(t_2)  \Big]  | W \beta \ran    \\ &  =2 i \lan W \beta |  \psi^1(0) \otimes \psi^1(t_1)   | W \beta \ran \lan W \beta |  \psi^2(0) \otimes \psi^2(t_2)   | W \beta \ran
\end{align}
where the second line is the large $N$ result, and which can be made arbitrarily small as the number of shockwaves is increased. This is the same as what happens in higher dimensions where the left-right correlators die off exponentially in the spatial distance between the boundaries, which here grows arbitrarily with the number of shockwaves  \cite{Shenker:2013yza}. See figure \ref{sec2projlongtfd}.
\begin{figure}[t]
\begin{center}
\includegraphics[height=4cm]{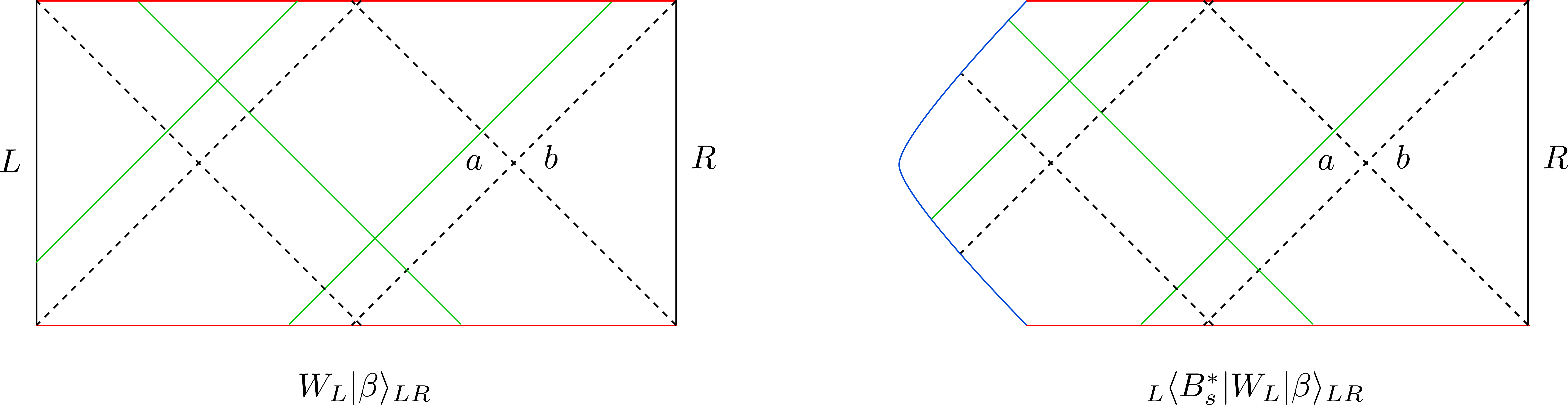}
\caption{Long wormholes supported by out-of-time-ordered (OTO) shockwaves also project into pure states with long throats capped off by an EWB.}\label{sec2projlongtfd}
\end{center}
\end{figure}
Therefore, we have arrived at pure state black holes in SYK that are more reminiscent of typical states in that all simple observables have thermalized, all the while having an understanding of the structure of the interior of the black hole. 


This set of states is also overcomplete and with overlap equal to the atypical case,
\begin{align}
\lan B_s^W | B_{s'}^W \ran  = \lan B_s^\beta | B_{s'}^\beta \ran
\end{align}
It is interesting to compute the overlap between typical and atypical states for the same left projection operator. Using the techniques above we find
\begin{align}
\lan B_s^\beta | B_{s'}^W \ran  = \lan \beta | W | \beta \ran
\end{align}
and therefore for states that differ by a small number of shockwaves (assumed to be created in the same way) the overlap is suppressed by powers of $1/N$. This can potentially be made exponentially small in $N$ by considering a very large number (order $N$) of shockwaves.

\subsection{Bulk Particle Gravitational Dressing and Boundary Energy}

We study in this section the possible ways that a bulk particle maybe be gravitationally dressed in AdS$_2$, and how this dressing affects the trajectory of the boundary and its energy. Our analysis will be completely within the Schwarzian theory coupled to massive bulk matter. We will leave the details to appendix \ref{gravdressing}.

We will work mostly in embedding space and global AdS$_2$ coordinates which are related by
\begin{align}
Y^{-1} &= {\cos t \over \sin \sigma}  \\
Y^{0} &= {\sin t \over \sin \sigma}\\
Y^{1} &=  - {\cos \sigma \over \sin \sigma} 
\end{align}
whose metrics are
\begin{align}
ds^2 &= - \left( d Y^{  -1} \right)^2 -  \left( d Y^{ 0 } \right)^2 +  \left( d Y^{  1} \right)^2, \ Y^2 = -1 \\
ds^2 &= {-dt^2 + d \sigma^2 \over \sin^2 \sigma} 
\end{align}

We begin with the case of a bulk particle in the eternal black hole solution represented by two boundary particles, one for each SYK system. These boundary particles behave as oppositely charged particles in an electric field whose trajectories satisfy \cite{Kourkoulou:2017zaj, Maldacena:2017axo, Maldacena:2016upp}
\begin{align}
Y \cdot Q_{L_\partial} = + q, \ \ Y \cdot Q_{R_\partial} = - q
\end{align}
where $Y^\mu$ is the trajectory in embedding space, and for some $q$. These conditions completely determine the trajectory in terms of the charges.  Ignoring the presence of the bulk particle for the moment, we can pick a gauge where the boundary charges are
\begin{align}
Q^a_{R_\partial} = (Q^{-1}_{R_\partial}, Q^0_{R_\partial}, Q^1_{R_\partial})   = (\sqrt{E}, 0 , 0) = - Q^a_{L_\partial}
\end{align}
The left boundary charge is determined form the right by the gauge constraint condition $Q_{R_\partial} + Q_{L_\partial} = 0$. The energy measured on either boundary is given by the quadratic Casimir constructed from the charges, for e.g.
\begin{align}
H_R = -Q^2_{R_\partial} = E
\end{align}

A massive neutral bulk particle satisfies the condition
\begin{align}
Y \cdot Q_{Bp} = 0
\end{align}
which also completely determines the trajectory. We choose to parameterize this charge as
\begin{align}
Q^a_{Bp} = m ( \sinh \gamma \sin \theta, \sinh \gamma  \cos \theta, - \cosh \gamma)
\end{align}
where $m$ is the mass of the particle, $\gamma$ is the rapidity of the particle relative to global AdS$_2$ coordinates (the world line approaches a null line as $\gamma \rightarrow \infty$), and $\pi - \theta$ is the value of global time $t$ at which the particle passes through the center of AdS$_2$, $\sigma = {\pi/2}$. A typical trajectory is shown in figure \ref{bulkparticletfd}.
\begin{figure}[t]
\begin{center}
\includegraphics[height=6cm]{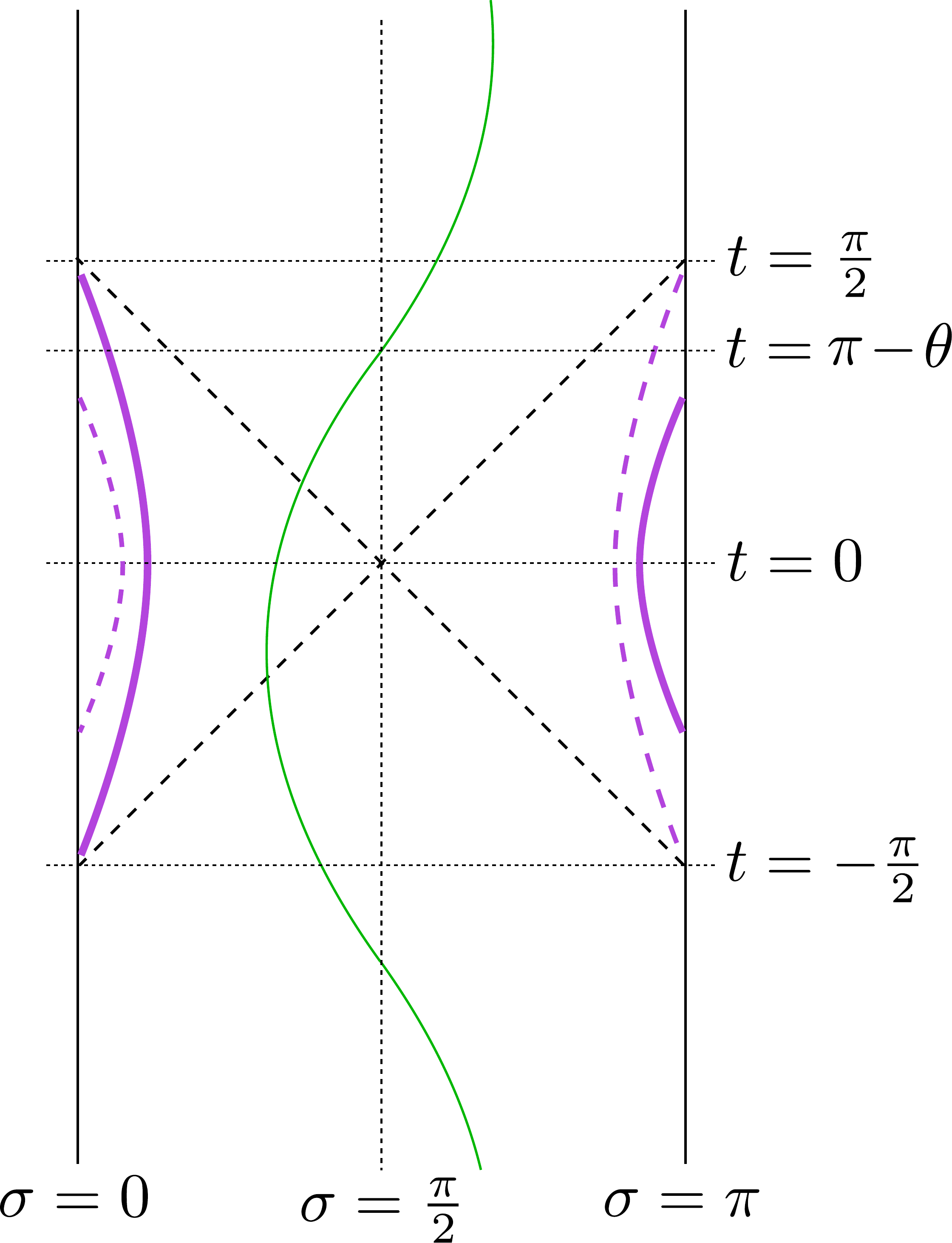} \ \includegraphics[height=6cm]{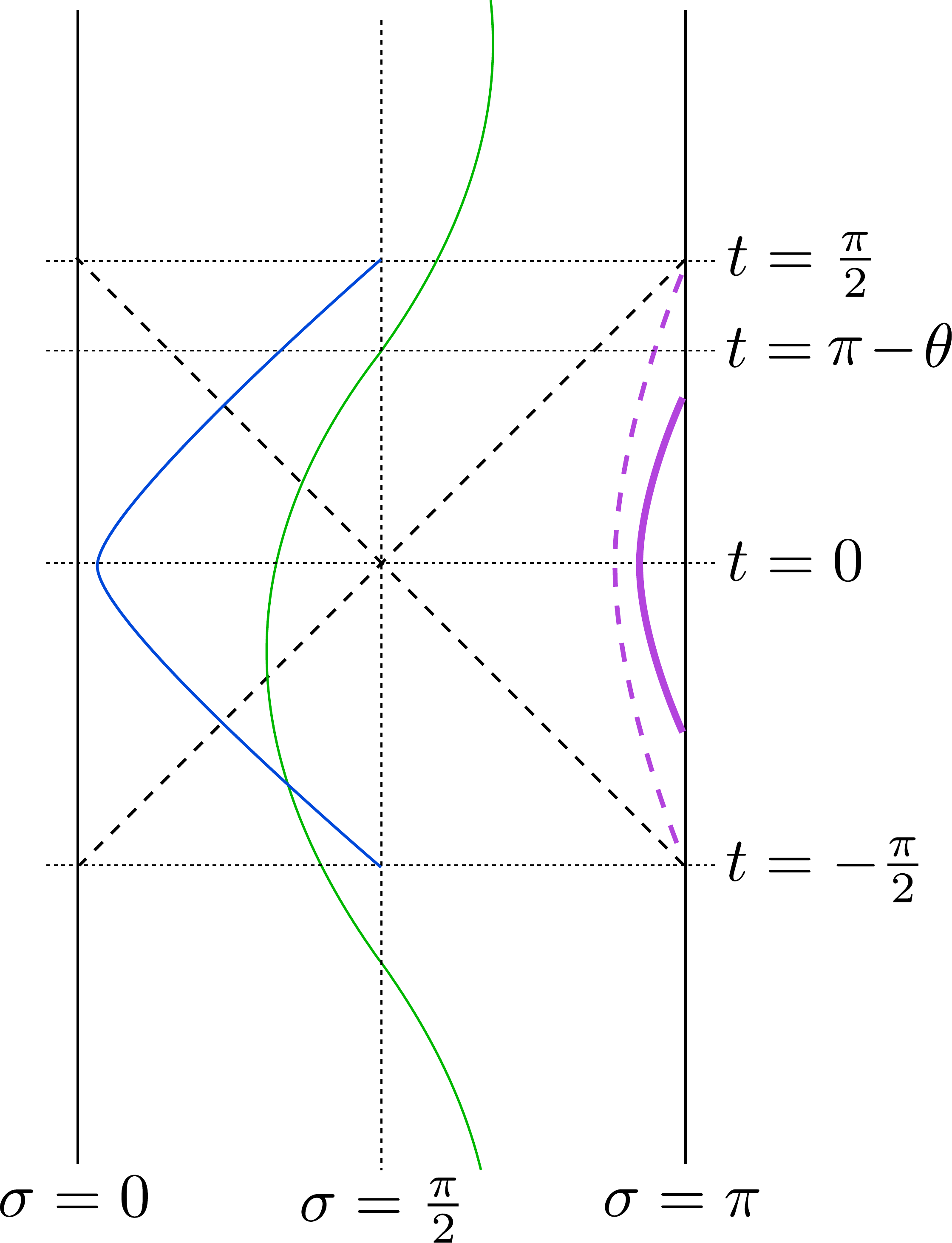}
\caption{Gravitationally dressing a bulk particle (green) to either boundary pushes the boundary trajectories away from the center of AdS$_2$. The solid boundary lines correspond to dressing entirely to the right, and the dotted trajectories are for dressing entirely to the left. The same story holds for the case of the EWB on the right diagram.}\label{bulkparticletfd}
\end{center}
\end{figure}
In order to place this particle inside the eternal black hole spacetime, we must satisfy the new gauge constraint
\begin{align}
 Q^a_{L_\partial} + Q^a_{Bp}   + Q^a_{R_\partial} = 0
\end{align}
We choose to do so by keeping fixed the trajectory of the bulk particle in global coordinates. This amounts to holding fixed $Q_{Bp}$ and modifying the boundary particle charges. This modification is the result of gravitationally dressing the bulk particle to the boundaries, of which there is an infinite number of ways to do so. Two particularly interesting cases is where the bulk particle is either dressed entirely to the left or  entirely to the right. Respectively, this would leave the charge of the right or left unchanged. This has an interesting effect on the energy measured on the boundaries. The energy of the two boundaries when dressing the particle entirely to the right is
\begin{align}
H_L &= E \\
H_R &=  \left( \sqrt{E} - m  \sinh \gamma \sin \theta \right)^2 + \left( m \sinh \gamma  \cos \theta \right)^2 - \left( m \cosh \gamma \right)^2 \\
&= E - 2 m \sqrt{E} \sinh \gamma \sin \theta - m^2 \\
&\approx  E - 2 m \sqrt{E} \sinh \gamma \sin \theta 
\end{align}
where in the last line we took the mass of the particle to be much smaller than mass of the black hole. Since all the dressing is pointing towards the right boundary, we see that the left energy is insensitive to the presence of the bulk particle, as expected. 

The dependence of the right measured energy on the trajectory of the bulk particle is interesting. Let's first consider the case where $\gamma \neq 0$. In the gauge picked, the bulk particle will emerge from the past horizon and fall into the future horizon, and $\theta$ controls on which exterior the particle will emerge into. For $0 < \theta < \pi$ the particle emerges into the left exterior and registers as negative energy on the right boundary, while for $\pi < \theta < 2 \pi$ it emerges into the right exterior and registers as positive energy on the right boundary. For the case where $\theta = \pi n$ for $n \in \mathrm{Z}$ the particle never emerges out of the black hole and it registers as negative energy $m^2$ when dressed to either boundary. This case is identical to the $\gamma = 0$ situation which describes a particle at rest going through the bifurcation point, and in fact the two are related by an SL2 transformation which preserves the boundary particle trajectories, namely Rindler time evolution.

The trajectories of the boundary particles are also modified by the dressing. As discussed in appendix \ref{gravdressing}, the modification is qualitatively the same for all $\theta$, and the boundary trajectory is pushed farther towards the global AdS$_2$ boundary.

The situation with the end-of-the-world brane is nearly identical. The charge of the brane is that of a bulk particle with $\gamma \rightarrow \infty$ and $\theta = \pi/2$. Dressing the bulk particle to the boundary leads to the same conclusions as in the eternal black hole in terms of energy and modification of the trajectory. The new thing here is that we have the option of dressing the bulk particle to the brane. Such particles do not change the energy at the boundary, and can be thought of as operators in the single remaining SYK boundary that commute with the Hamiltonian (obviously not as an operator statement but within some subspace of states). Moreover, we find that the trajectory of the brane is not modified but its mass is always decreased independent of the location of the bulk particle\footnote{ Dressing the bulk particles to the brane requires the theory to contain branes of different masses, which is an assumption we make about the UV theory of the bulk. We thank J. Maldacena for pointing this out. }.

\section{Reconstruction of the Interior via Quantum Error Correction}
\label{section3}
\subsection{A Puzzle}
\label{sec3puzzle}
In the previous section we reviewed how to construct pure black hole microstates with apparently smooth horizons by projecting out one side of the TFD.  We discuss in this subsection an issue this raises from the perspective of bulk reconstruction. In particular, we know from subregion-subregion duality (SSD) that one may write the TFD interpreted as an erasure subsystem code \cite{Harlow:2016vwg} as
\begin{align}
| \beta \ran_{LR} = U_L U_R | \psi \ran_{a b} | \chi \ran_{\bar{a} \bar{b}}
\end{align}
where $\mH_a \otimes \mH_{\bar{a}}$ is a subspace of $\mH_L$ and $\mH_b \otimes \mH_{\bar{b}}$ is  a subspace of  $\mH_R$. The state $| \psi \ran_{ab}$ represents the state of the quantum fields on the fixed eternal black hole background, and the code subspace is spanned by all states obtained by acting on the factor. The empty TFD, or the Hartle-Hawking vacuum, is an element of the code subspace. All $LR$ states in this subspace maintain the same state of $\bar{a} \bar{b}$, which represents the fixed background geometry. The unitaries are the so-called encoding  unitaries which control how the bulk state is embedded in the boundary product Hilbert space. 

\begin{figure}[t]
\begin{center}
\includegraphics[height=4cm]{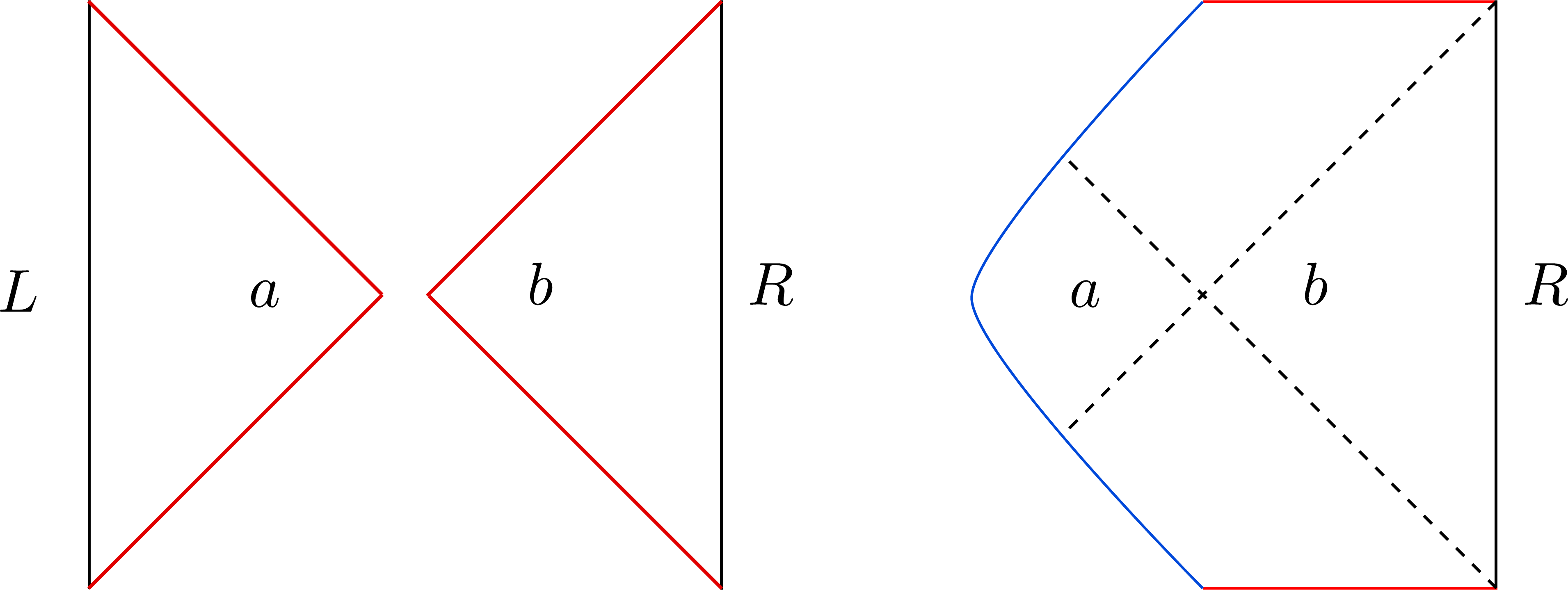}
\caption{The naive expectation (left figure) is that a complete projection on the left boundary would distangle the quantum fields across the horizon forming a firewall. This is inconsistent with the motivated picture from the SYK analysis (right) that this projection generates a pure black hole with an interior and a smooth horizon. }\label{puzzle}
\end{center}
\end{figure}

This code reproduces the RT formula along with the FLM correction \cite{Harlow:2016vwg}. Indeed, the von Neuman entropy of, say, the right boundary is
\begin{align}
S(\rho_R^\beta) = S(\rho_{\bar{a}}^\chi) + S(\rho_{a}^\psi) 
\end{align}
where the subscripts denote the state from which the reduced density matrix is computed. The first term is fixed for all states in this code subspace and can be thought of as the area term $A/4G_N$. The second is the FLM bulk entanglement entropy piece.

Now we can state the puzzle: if we project out the left system in the TFD we will necessarily disentangle $L$ and $R$ and thus naively also  disentangle the two subsystems $a$ and $b$ from each other,
\begin{align}
| \beta \ran_{LR} = U_L U_R | \psi \ran_{a b} | \chi \ran_{\bar{a} \bar{b}} \rightarrow | P \ran_L\lan P | \beta \ran_{LR} \stackrel{?}{=} | P \ran_{a \bar{a}} | P \beta \ran_{b \bar{b}}
\end{align}
Therefore, it would seem that the bulk state will necessarily factorize into an unentangled state of the quantum fields across the horizon! This is a recipe for a firewall. This is inconsistent with the constructions of the previous section where the smoothness of the horizon was maintained after the action of the projection. See figure \ref{puzzle}.

The rest of this paper is about the resolution of this puzzle and its related consequences. We will see that the flaw in the last argument is the assumed rigidity of the AdS/CFT dictionary relating the bulk and boundary Hilbert spaces. We will show how the QEC interpretation of the duality produces a fluid dictionary which maintains the entanglement across the horizon. Moreover, this construction produces an explicit reconstruction map for the operators behind black hole horizon. It will be clear that this dictionary will be `state-dependent' providing a concrete realization of the recent ideas of reconstructing the interiors of black holes \cite{Nomura:2014woa, Nomura:2013gna, Nomura:2012cx, Verlinde:2013uja, Verlinde:2013qya, Verlinde:2012cy, Papadodimas:2015jra, Papadodimas:2013jku, Papadodimas:2012aq}.

\subsection{Toy Model: Projected Random Tensor}

\label{toytensorproj}

We begin by considering a toy model for the AdS/CFT correspondence constructed out of a network of random tensors \cite{Pastawski:2015qua, Hayden:2016cfa}. A random tensor is a quantum circuit which prepares a set of qubits in a random state in, say, the computational basis. Let's define such a tensor that prepares a state in the product Hilbert space $\mH_L \otimes \mH_a \otimes \mH_{H_L}$, with $|\mH_L | \gg  |\mH_a| \times  |\mH_{H_L}|$, and also that $|\mH_a| \ll |\mH_{H_L}|$. The prepared state is
\begin{align}
| T \ran = U_{Rand} | 00...0 \ran_{LaH_L} = \sum_{i k} |\psi_{i k} \ran_L | i \ran_a | k \ran_{H_L} \label{singletensorequation}
\end{align}
where the sum runs over an entire basis of $\mH_a \otimes \mH_{H_L}$. The random unitary $U_{Rand}$, guarantees that ${}_L\lan \psi_{i k} | \psi_{i' k'} \ran_L = \delta_{i i'} \delta_{k k'} $,  implying both that there is no mutual information between $a$ and $H_L$ and that both are maximally entangled with $L$. The subspace of $L$ spanned by $\{ | \psi_{i k} \ran_L\}$ is the code subspace of the $\mH_L$.

\begin{figure}[t]
\begin{center}
\includegraphics[height=4cm]{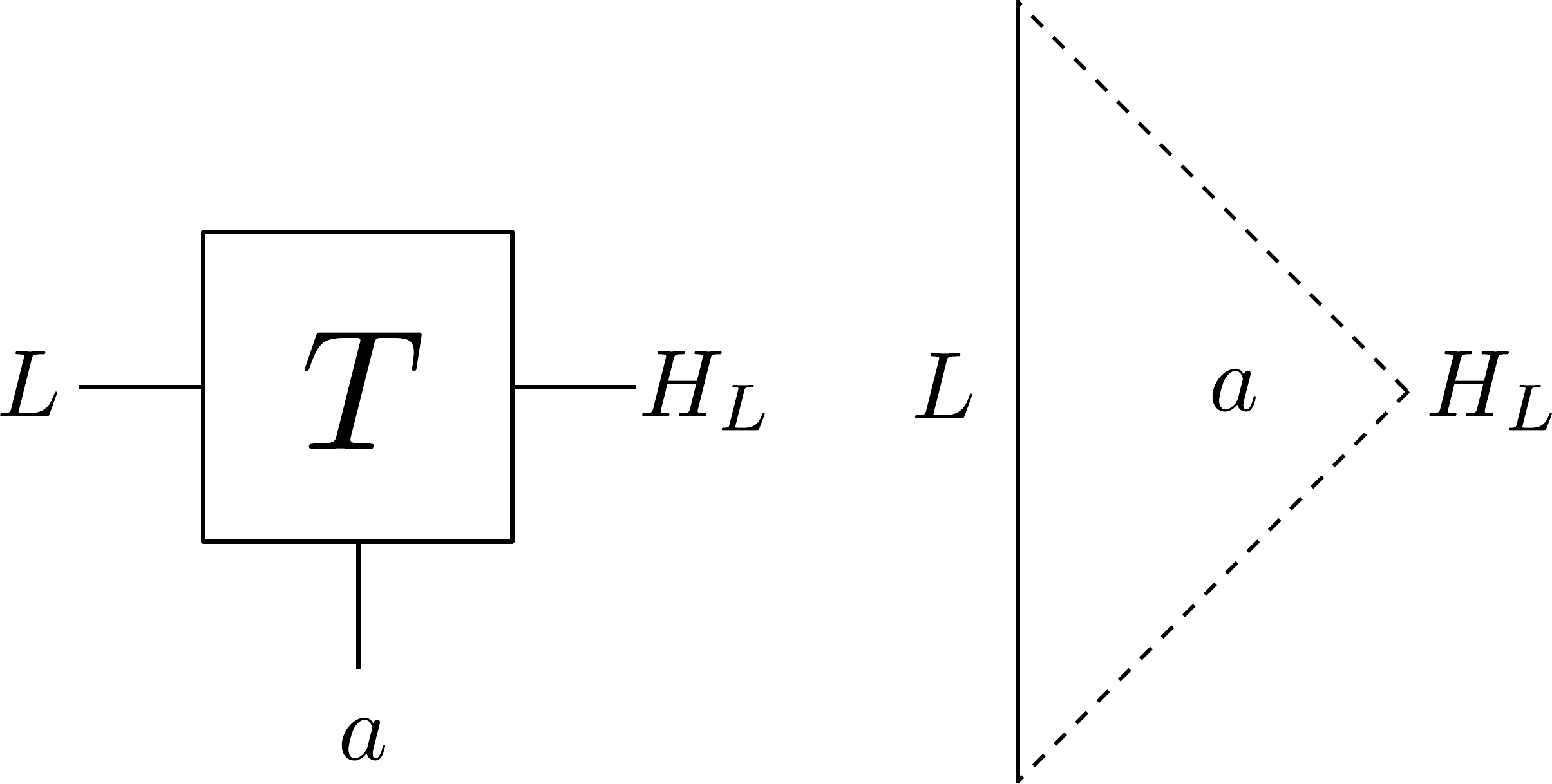}
\caption{A single tensor can be viewed as the encoding of the state of a pure black hole's horizon degrees of freedom $H_L$ and a set of external modes $a$  into the boundary degrees of freedom $L$.  }\label{singletensor}
\end{center}
\end{figure}

This tensor can be thought of as a simplified version of a holographic dictionary for a pure black hole in AdS with boundary $L$, a set of low energy exterior modes $a$, and horizon degrees of freedom $H_L$. The dictionary is implemented in the following way: Given a bulk state $| \phi \ran_{a H_L}$ we can obtain its boundary dual by projecting its complex conjugate on the tensor state as follows
\begin{align}
|\Psi_{\phi} \ran_L = {}_{aH_L} \lan \phi^* | T \ran.
\end{align}
The complex conjugation is just a convenience in order to guarantee that $\sum_{i k} \alpha_{i k} | i \ran_a | k \ran_{H_L}$ maps to $\sum_{i k} \alpha_{i k} | \psi_{i k}\ran_{L}$.

From this we can deduce an operator dictionary.  An operator $\mO_{a H_L}$ on $a H_L$ would be `dual' to an operator $\mO_L$ on $L$ if $\mO_{a H_L} | \phi \ran_{a H_L}$ maps to $\mO_L | \Psi_\phi \ran_L$ for all $| \phi \ran_{a H_L}$. Take for instance a bulk operator supported only on $a$, and an operator on $L$ which satisfies this duality criterion
\begin{align}
\mO_a \otimes \mI_{H_L} = \sum_{i j} \mO_{i j}  |i\ran_{a} \lan j | \otimes \mI_{H_L} \rightarrow \mO_L = \sum_{i j k} \mO_{i j}  | \psi_{i k }\ran_{L} \lan  \psi_{j k}| 
\end{align}

\begin{figure}[t]
\begin{center}
\includegraphics[height=4cm]{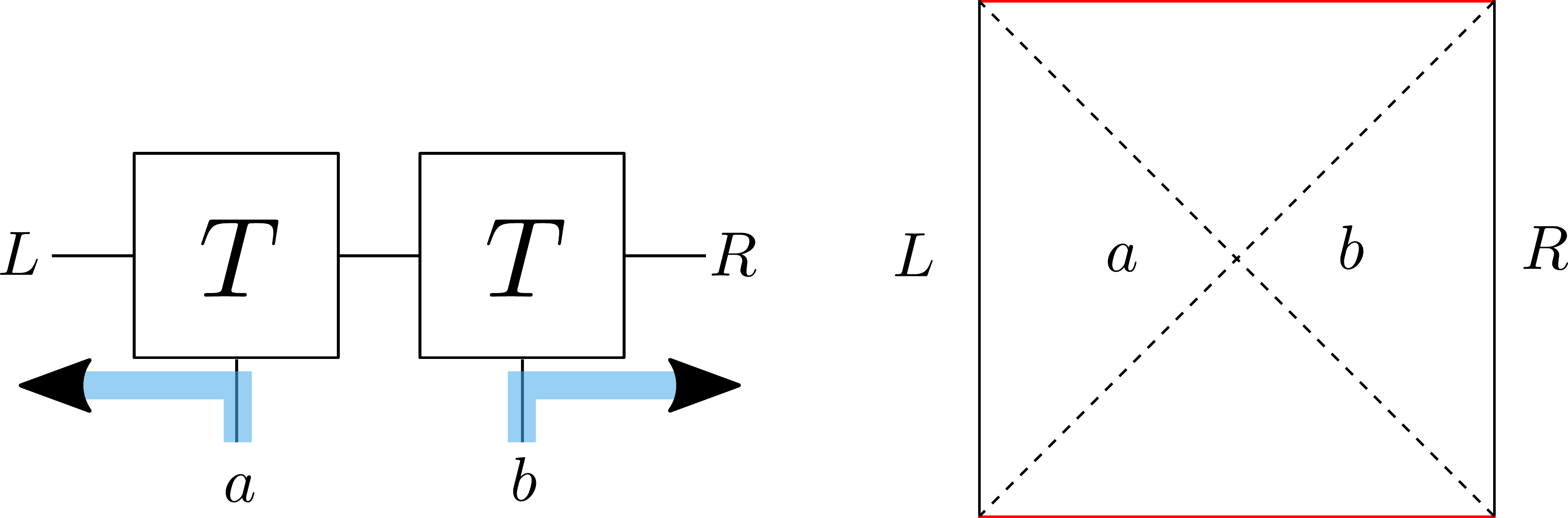}
\caption{The combined tensor produces a subsystem code describing the encoding of the two exterior sets of modes, $a$ and $b$, into their corresponding boundaries, $L$ and $R$. }\label{combinedtensor}
\end{center}
\end{figure}

We can get a toy model for the eternal black hole by sowing two such codes via summing over the horizon indices $H$. We denote the new tensor by $| TT \ran$
\begin{align}
| TT \ran = \sum_{i j k} |\psi_{i k} \ran_L | \psi_{j k} \ran_R | i \ran_{a} | j \ran_{b}
\end{align}
It's not hard to see that this code satisfies subregion-subregion duality, namely
\begin{align}
\mO_a \otimes \mI_{b} \rightarrow \mO_L \otimes \mI_{R}, \ \mI_{a} \otimes\mO_b \rightarrow \mI_{L} \otimes \mO_R
\end{align}
using a similar map to the single tensor case, and furthermore satisfies the quantum corrected RT formula. Take for instance a state $|\phi \ran_{ab}$ which maps to the state $|\Psi_{\phi} \ran_{LR}$. The von Neumann entropy of $R$ in this state is
\begin{align}
S(\rho_R^{\Psi_{\phi} }) = | \mH_H | +  S(\rho_b^{\phi })
\end{align}
where $ | \mH_H |$ comes from summing the index $k$ and can be regarded as reproducing the area term of RT, and $\rho_b^\phi = \Tr_a | \phi \ran \lan \phi |, \ \rho_R^{\Psi_\phi} = \Tr_L | \Psi_\phi \ran \lan \Psi_\phi |$. The second term is the FLM quantum correction to the RT formula.

Now we study how the correction properties of this code get modified by the action of a projection operator on the $L$ system. In particular, we want to see if a state $|\phi \ran_{ab}$ is preserved under the action of a projector on $L$. We check this through the following series of steps:
\begin{enumerate}
\item Project the state $| \phi^* \ran_{ab}$ on the TFD tensor network to obtain the boundary dual of $| \phi \ran_{ab}$
\begin{align}
| \phi \ran_{ab} \rightarrow | \Psi_{\phi} \ran_{LR} = {}_{ab} \lan \phi^* | TT \ran 
\end{align}
\item Act on the left boundary with the projection operator $|P\ran_L \lan P |$ to obtain a new product state of the two boundaries
\begin{align}
| P \ran_L \lan P | \Psi_{\phi} \ran_{LR}
\end{align}
\item Run this new product state through the old tensor network to generate its dual bulk state. The question of interest is: what are the conditions on $P$ such that we regain the original bulk state $| \phi \ran_{ab}$
\begin{align}
{}_{LR}\lan \Psi_{\phi} | P \ran_{L} \lan P | TT \ran \stackrel{?}{=} | \phi \ran_{ab}
\end{align}
Note that a more convenient interpretation of the left hand side of this equality is the projection of a new state of the right boundary ${}_L \lan P | \Psi_{\phi} \ran_{LR}$ on the new projected tensor network ${}_L \lan P | TT \ran$. This new projected tensor network represents the new dictionary post projection.
\end{enumerate}

\begin{figure}[t]
\begin{center}
\includegraphics[height=4cm]{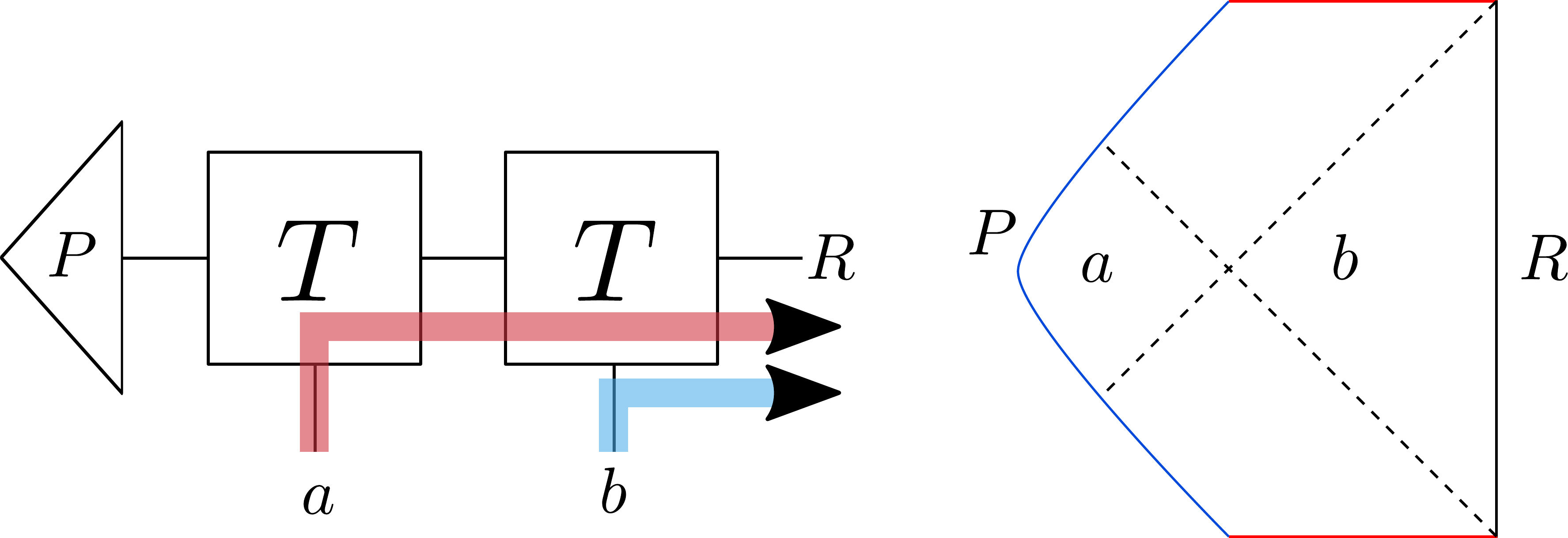}
\caption{The projected tensor now describes a mapping of both $a$ and $b$ into the right boundary $R$. The map from $b$ into $R$ is the same as in the unprojected case and does not depend on $P$, while the map from $a$ into $R$ depends on it sensitively.}\label{projectedtensor}
\end{center}
\end{figure}

For this to be true for all bulk states, it is necessary and sufficient to apply it to a basis
\begin{align}
\sum_k {}_{R}\lan \psi_{j k} | {}_{L}\lan \psi_{i k} | P \ran_{L} \lan P | TT \ran \stackrel{?}{=} | i \ran_{a} | j \ran_{b}
\end{align}
The left hand side simplifies to
\begin{align}
\sum_{i'} \left( \sum_k  {}_L\lan \psi_{i k}| P \ran_L \lan P | \psi_{i' k}\ran_L \right)  | i' \ran_a | j \ran_b
\end{align}
Therefore, to preserve the bulk state we require that
\begin{align}
\sum_k  {}_L\lan \psi_{i k}| P \ran_L \lan P | \psi_{i' k}\ran_L = \delta_{i i'} \label{errorcond315}
\end{align}
This condition is a standard QEC condition on the set of correctable errors, namely that they act as the identity within the code subspace. From the bulk, this says that the insertion of the end-of-the-world brane does not alter the state of the bulk quantum fields. This is not exactly correct, and we'll consider the more realistic situation in section \ref{oap}.

This condition can be satisfied by choosing $P$ such that the projection of $| P \ran_L$ onto the code subspace is a random state in the basis $| \psi_{i k} \ran$; equivalently that $ \lan P | \psi_{i' k}\ran$ are random complex numbers. This satisfies the necessary equality to an accuracy of $ \sqrt{|a| / |H_L|}$.  Note that we can easily pick a projection which does not preserve the bulk state, for example $| P \ran  =\sum_{k} \alpha_k | \psi_{1k} \ran$, for any $\alpha_k$. This will necessarily break the entanglement between the bulk modes creating a firewall.

We can also determine the operator map from the bulk legs $a$ and $b$ into $R$ after the projection. The goal is to find for every logical operator $\mO_{LR}$, dual to some bulk operator $\mO_{ab}$ in the original unprojected tensor code, an operator supported purely on $R$ such that
\begin{align}
{}_L \lan P | \mO_{LR} | \Psi_{\phi} \ran_{LR} = \mO_{R}  \ {}_L \lan P |  \Psi_{\phi} \ran_{LR} \label{opcond316}
\end{align} 
for all $| \phi \ran_{ab}$ where $| \Psi_{\phi} \ran_{LR} =  {}_{ab} \lan \phi^* | TT \ran $. The dictionary for operators on the right exterior is the same as that of the original unprojected tensor giving the map
\begin{align}
\mI_a \otimes \mO_b \equiv \mI_a \otimes \sum_{j j'} \mO_{j j'} |j \ran_b \lan j' | \rightarrow \mO_R^b = \sum_{j j' k} \mO_{j j'} |\psi_{j k} \ran_R \lan \psi_{j' k} |
\end{align}
As for operators originally on the left exterior, which become interior operators after the projection, we have
\begin{align}
 &\mO_a  \otimes  \mI_b \equiv  \sum_{i i'} \mO_{i i'} |i \ran_a \lan i' | \otimes  \mI_b \nonumber \\
& \rightarrow \mO_R^a (P) = \sum_{i i' j k k'} \mO_{i i'}    {}_L \lan P | \psi_{i k} \ran_{L}  |\psi_{j k} \ran_R \lan \psi_{j k'} |    {}_L \lan \psi_{i' k'} | P \ran_{L}
\end{align}
Which can be checked to satisfy \ref{opcond316} assuming \ref{errorcond315}. And therefore we have generated a new dictionary for the interior operators $\mO_R^a(P)$ which looks very different from that of  the exterior operators $\mO_R^b$. A key difference is in the dependence of the interior operators on the projection operator $P$, and therefore on the microstate of the black hole around which our code subspace lives. This dictionary therefore is state-dependent \cite{Nomura:2014woa, Nomura:2013gna, Nomura:2012cx, Verlinde:2013uja, Verlinde:2013qya, Verlinde:2012cy, Papadodimas:2015jra, Papadodimas:2013jku, Papadodimas:2012aq}.

This result shows how the puzzle of the previous subsection is resolved in this model, and that indeed the bulk state and the entanglement across the horizon is maintained. The invalid assumption we made previously was to take the dictionary between the bulk and boundary to be rigid, namely that defined by the state $| TT \ran$. However, what we learn now is that the projected tensor defines a new dictionary generated by acting with the projection operator $_{L}\lan P | TT \ran$. In particular, while prior to the projection the bulk factors $a$ and $b$ were reconstructable in $L$ and $R$ respectively, the post projection tensor network $_{L}\lan P | TT \ran$ maps both to the right boundary $R$. This fluidity of the dictionary is a new observation bound to be critical for general bulk reconstruction.

\begin{figure}[t]
\begin{center}
\includegraphics[height=4cm]{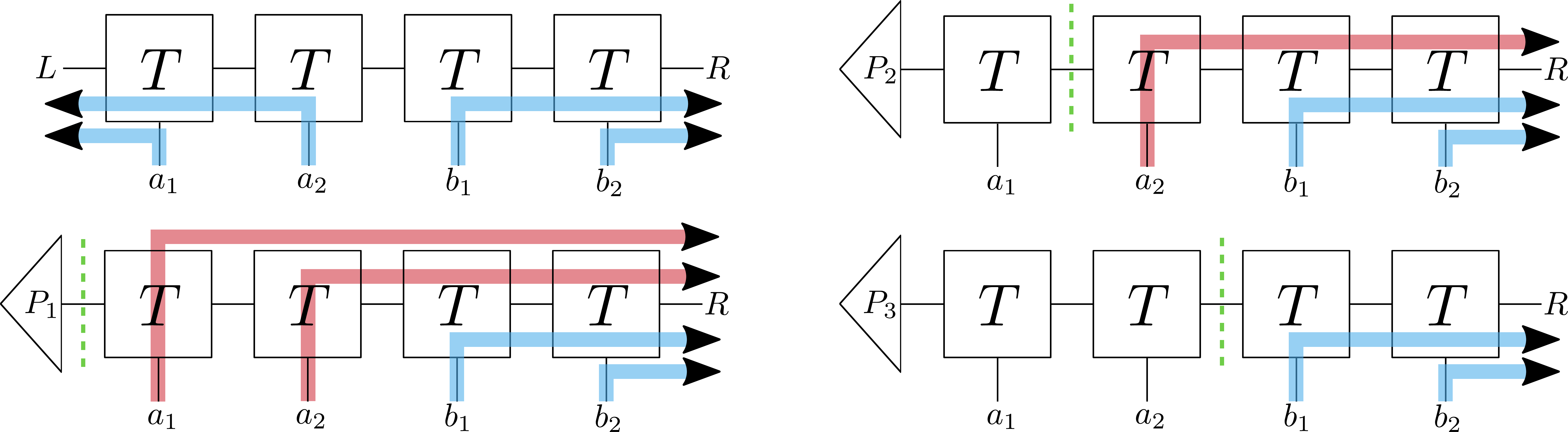}
\caption{This long tensor network can be viewed as either a regular eternal black hole with more external modes in the code subspace, or as a long wormhole where some of bulk legs correspond to modes in the interior. The top left picture represents the standard dictionary for a wormhole where the RT surface in the center. As you go from $P_1$ to $P_3$ the projection is more fine tuned to place the brane, shown in dotted green, at different locations in the bulk.}\label{projectedlongtensor}
\end{center}
\end{figure}

This toy model makes it seem that the entire left exterior is either projected on or remapped to the right, without anything in between. However, this is due to the simplicity of the model having only a single bulk index on each exterior.  We could consider instead combining a number of random tensors that satisfy the subregion subregion duality structure of the thermofield double. Take for instance the case with four bulk legs shown in figure \ref{projectedlongtensor} utilizing four random tensors of different dimensionality. Note that this tensor network can also be thought of as that of a long wormhole where some of the bulk indices correspond to modes in the interior. The tensor state for this network is
\begin{align}
|T^4 \ran = \sum_{i_1 i_2 j_1 j_2 k}  |\psi_{i_1 i_2  k} \ran_L | \psi_{j_1 j_2  k} \ran_R | i_1 \ran_{a_1} | i_2 \ran_{a_2} | j_1 \ran_{b_1} | j_2 \ran_{b_2}
\end{align}
where the sums run over an entire basis of $\mH_{a_1}\otimes \mH_{a_2}\otimes \mH_{b_1}\otimes \mH_{b_2}$, and the states appearing in the factors $L$ and $R$ satisfy $\lan \psi_{i_1 i_2  k} |\psi_{i'_1 i'_2  k'} \ran = \delta_{i_1 i'_1} \delta_{i_2 i'_2} \delta_{k k'}$. For a completely generic projection operator we would reproduce
\begin{align}
\sum_k  {}_L\lan \psi_{i_1 i_2  k}| P \ran_L \lan P | \psi_{i'_1 i'_2  k}\ran_L = \delta_{i_1 i'_1}  \delta_{i_2 i'_2}
\end{align}
However, we could choose a less random projector so that ${}_L\lan \psi_{i_1 i_2  k}| P \ran_L$ are random coefficients without correlations when varying $i_2$ and $k$, but with correlations in the $i_1$ index. This can be chosen to produce, for example, the condition
\begin{align}
\sum_k  {}_L\lan \psi_{i_1 i_2  k}| P \ran_L \lan P | \psi_{i'_1 i'_2  k}\ran_L = \delta_{i_1 1} \delta_{i'_1 1} \delta_{i_2 i'_2}
\end{align}
Therefore, the bulk state would transform after the projection as follows
\begin{align}
|\psi \ran_{a_1 a_2 b_1 b_2} \rightarrow | 1 \ran_{a_1} | \widetilde{\psi} \ran_{a_2 b_1 b_2}
\end{align}
where the latter factor is mapped to the right boundary, as in the top right picture of figure \ref{projectedlongtensor}. The bulk dual of the projection in this case would be a brane which partitions the bulk between the $a_1$ and $a_2$ subsystems\footnote{One can also find a projection which projects on the state of $a_2$ but where $a_1$ is still remapped to the $R$ boundary. It is not obvious what the bulk spacetime would look like for this situation.}.

We've assumed in this section that the size of the bulk Hilbert space $a$ corresponding to the projected black hole interior was smaller than that of the horizon legs $H_L$. This was necessary to ensure the QEC property of establishing a dictionary between the interior and the boundary, and to guarantee that orthogonal bulk states map to orthogonal boundary states. To see how this would fail otherwise, consider again the projected tensor network represented by the state
\begin{align}
{}_L \lan P | TT \ran = \sum_{i j k}  {}_L \lan P | \psi_{i k} \ran_L | \psi_{j k} \ran_R | i \ran_{a} | j \ran_{b}
\end{align}
Consider two (naively) orthogonal bulk states $| i \ran_a | j \ran_a $ and $| i' \ran_a | j' \ran_a $ and compute their overlap after mapping them to the boundary. These states map onto the boundary as 
\begin{align}
| i \ran_a | j \ran_a & \rightarrow \sum_{ k} {}_L \lan P | \psi_{i k} \ran_L | \psi_{j k} \ran_{R} \\
| i' \ran_a | j' \ran_a & \rightarrow \sum_{ k} {}_L \lan P | \psi_{i' k} \ran_L | \psi_{j' k} \ran_{R}
\end{align}
The overlap of these states on the boundary is given by
\begin{align}
\sum_{k k'}     {}_L \lan \psi_{i k'}| P \ran_L   {}_L \lan P | \psi_{i k} \ran_L      {}_R \lan \psi_{j' k'} | \psi_{j k} \ran_{R} =  \delta_{j' j}\sum_{k}     {}_L \lan \psi_{i k'}| P \ran_L   {}_L \lan P | \psi_{i k} \ran_L      
\end{align}
The delta function $\delta_{j' j}$ represents the orthogonality of the exterior bulk states irrespective of the size of the horizon. For the case of $|a| \ll |H_L|$, the second factor should equal $\delta_{i i'}$ giving  the QEC property \ref{errorcond315}. Now, if $|a| > |H_L|$, this property can never be satisfied (a Hilbert space cannot contain a number of mutually orthogonal states greater than its dimension), and therefore the states $| i \ran_a$ and $| i' \ran_a$ are not orthogonal from the boundary perspective. The reason why the size of the horizon is relevant is that it acts as bottleneck that the information of $a$ needs to go through on its way to the right boundary $R$. It would be interesting to study what this means for bulk operators, and whether it implies a departure from their naively expected algebra. In the rest of the paper we will assume that the dimension of the bulk legs is never larger than the dimension of the horizon as to avoid these problems.

\subsection{Projected Quantum Subsystem Correcting Code}
\label{pqscc}
Next we consider the erasure subsystem code of \cite{Harlow:2016vwg} and prove a theorem about how its recovery properties are modified by the projection operator. We will continue with the notation above adapted to the eternal black hole setup.

This code is summarized as follows: Within the two boundary Hilbert space, $\mH = \mH_L \otimes \mH_R$ (assumed to have finite dimension), one can assume the existence of a factorizeable code subspace $\mH_{code} = \mH_a \otimes \mH_b$, whereby $a$ and $b$ correspond to the left and right exteriors respectively. Defining $| \wt{i} \ran$ and $|\wt{j} \ran$ as orthonormal basis states for $\mH_a$ and $\mH_b$,  the following statements, among others, are equiavalent \cite{Harlow:2016vwg}
\begin{enumerate}
\item For $|a| < |L|$ and $|b| < |R|$, the left and right Hilbert spaces can be decomposed as $\mH_L = \left( \mH_{L_a} \otimes \mH_{\wb{L}_a} \right) \oplus  \mH_{\wt{L}} $ and $\mH_R = \left( \mH_{R_b} \otimes \mH_{\wb{R}_b} \right) \oplus \mH_{\wt{R}}$, with $|L_a| = |a|$ and $|R_b| = |b|$ and where $|\wt{L}| < |a|$ and $|\wt{R}| < |b|$. There exists encoding unitary operators $U_L$ and $U_R$ on $L$ and $R$, respectively, such that
\begin{align}
| \wt{i j} \ran_{LR} = U_L  U_R  |i \ran_{L_a} | j \ran_{R_b} | \chi \ran_{\wb{L}_a \wb{R}_b}
\end{align}
for some state $|\chi \ran$ on $\mH_{\wb{L}_a} \otimes \mH_{\wb{R}_b}$.

\item For all logical operators $\wt{\mO}_a$ and $\wt{\mO}_b$ acting within the code subspace $\mH_{code}$, there exist operators $\mO_L$ and $\mO_R$ such that
\begin{align}
\mO_L | \wt{\psi} \ran &= \wt{\mO}_a | \wt{\psi} \ran, \ 
\mO_L^\dagger | \wt{\psi} \ran = \wt{\mO}_a^\dagger | \wt{\psi} \ran \\
\mO_R | \wt{\psi} \ran &= \wt{\mO}_b | \wt{\psi} \ran, \ 
\mO_R^\dagger | \wt{\psi} \ran = \wt{\mO}_b^\dagger | \wt{\psi} \ran 
\end{align}
for any state $|\wt{\psi} \ran \in \mH_{code}$.

\item The reference state
\begin{align}
|\phi \ran = {1 \over \sqrt{|a| |b|}} \sum_{i j} | i \ran_{T_a} | j \ran_{T_b} | \wt{i j} \ran
\end{align}
where $T_a$ and $T_b$ are auxiliary subsystems of dimensions $|a|$ and $|b|$ respectively. The density matrices constructed from this state satisfy
\begin{align}
\rho_{T_a T_b R}(\phi) &= \rho_{T_a}(\phi) \otimes \rho_{T_b R}(\phi) \\
\rho_{T_a T_b L}(\phi) &= \rho_{T_b}(\phi) \otimes \rho_{T_a L}(\phi) 
\end{align}

\end{enumerate}

We refer the reader to \cite{Harlow:2016vwg} for a full proof of the equivalence of these statements. The second point above is the QEC interpretation of subregion subregion duality, and it follows straightforwardly from the first condition. Since $|a| = |L_a|$, there is an isomorphism between operators acting on $\mH_a$ and $\mH_{L_a}$, 
\begin{align}
\wt{\mO}_a | \wt{i} \ran = \sum_{k} \mO_{k i} | \wt{k} \ran \sim \mO_{L_a} | i  \ran_{L_a} = \sum_{k} \mO_{k i} | k \ran_{L_a}
\end{align}
Therefore, an operator on $L$ with the same action can be defined as
\begin{align}
\mO_L = U_L \mO_{L_a}  U_L^\dagger
\end{align}
The same conclusion holds for the right side.

Next, we will study how this code is modified by the action of a projector on the $L$ subsystem. We will see, just as in the random tensor toy model, we can place conditions on the projection operators such that the entire original code subspace continues to be correctable. Consider the following theorem:

\begin{theorem} \label{first}
Consider a subsystem code for the encoding of a code subspace $\mH_{code} = \mH_a \otimes \mH_b$ in a larger physical Hilbert space $\mH_L \otimes \mH_R$ with the properties described above. Consider also a complete projection $P_L \equiv  | P \ran_L\lan P |$ on the subsystem $L$. The following statements are equivalent:
\begin{enumerate}[label=(\roman*)]

\item For $|a| < |\wb{R_b}|$, we consider the decomposition of $\mH_{\wb{R}_b} = \left( \mH_{\wb{R}_b^1} \otimes \mH_{\wb{R}_b^2} \right)  \oplus \mH_{\wb{R}_b^3}$ with $|\wb{R}_b^1| = |a|$ and $|\wb{R}_b^3| < |a|$. The projected code states can be written as
\begin{align}
 { {}_L\lan P | \wt{i j} \ran_{LR} \over \sqrt{N^P}  } = U_R \left( W^P_{\wb{R}_b} \otimes \mI_{R_b}   \right) | i \ran_{\wb{R}_b^1} | j \ran_{R_b} | \wt{\chi} \ran_{\wb{R}_b^2} 
\end{align}
where $U_R$ is the same unitary of the original subsystem code, and $ W^P_{\wb{R}_b}$ is a unitary which depends on the projection, and for some state $| \wt{\chi} \ran_{\wb{R}_b^2} $ and normalization $N^P$.

\item For any logical operator $\wt{\mO}$ of the unprojected code, there exists an operator $\mO_R$ such that
\begin{align}
&{}_L \lan P | \wt{\mO} |\wt{\psi} \ran_{LR} = \mO_R \ \   {}_L \lan P | \wt{\psi} \ran_{LR} \\
&{}_L \lan P | \wt{\mO}^\dagger |\wt{\psi} \ran_{LR} = \mO_R^\dagger  \ \  {}_L \lan P | \wt{\psi} \ran_{LR} 
\end{align}
for any state $|\wt{\psi} \ran$ of the original code subspace.

\item The projection onto the code subspace of  $P_L$ acts identically on the code subspace
\begin{align}
\mP_{code} \  P_L  \ \mP_{code} \  = N^P \  \mP_{code}
\end{align}
where $\mP_{code}$ is the projector on the original $\mH_{code}$, for some positive real number $N^P$. 

\item The projected reference state
\begin{align}
{}_L \lan P | \phi \ran =  {1 \over \sqrt{|a| |b|}}  \sum_{i j} | i j \ran_{T_a T_b} \ { {}_L\lan P | \wt{i j} \ran_{LR}   }
\end{align}
 when normalized, satisfies $I(T_a, T_b) = 0$ and $S_{ent}(\rho_{T_a T_b}) = \ln |\mH_{code}|$.

\end{enumerate}

\end{theorem}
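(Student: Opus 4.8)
My plan is to treat statement (iii) as the hub and establish (i)$\Leftrightarrow$(iii), (ii)$\Leftrightarrow$(iii), and (iv)$\Leftrightarrow$(iii). The central object is the projected basis $|\mu_{ij}\ran_R \equiv {}_L\lan P|\wt{ij}\ran_{LR}$, so the first computation I would do is $\mP_{code}\,P_L\,\mP_{code} = \sum_{ij,i'j'}\lan\mu_{ij}|\mu_{i'j'}\ran\,|\wt{ij}\ran\lan\wt{i'j'}|$, using the elementary identity $\lan\wt{ij}|P_L|\wt{i'j'}\ran = \lan\mu_{ij}|\mu_{i'j'}\ran$. Since the $\{|\wt{ij}\ran\}$ are orthonormal, matching coefficients shows that (iii) is \emph{exactly} equivalent to the Gram condition $\lan\mu_{ij}|\mu_{i'j'}\ran = N^P\delta_{ii'}\delta_{jj'}$, i.e.\ that the projected states are mutually orthogonal with common norm $N^P$. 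Everything else reduces to this single condition.

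For (i)$\Leftrightarrow$(iii) I would substitute the code form $|\wt{ij}\ran = U_L U_R|i\ran_{L_a}|j\ran_{R_b}|\chi\ran_{\wb{L}_a\wb{R}_b}$ from statement 1 into $|\mu_{ij}\ran$. Because $U_L$ and $P_L$ act only on $L$ while $U_R$ and $|j\ran_{R_b}$ live on $R$, contracting ${}_L\lan P|$ leaves the $R_b$ factor $|j\ran$ untouched and collapses the $L$-entanglement of $|\chi\ran$ into a single state on $\wb{R}_b$. Expanding $|\chi\ran = \sum_{\alpha\beta}\chi_{\alpha\beta}|\alpha\ran_{\wb{L}_a}|\beta\ran_{\wb{R}_b}$ and writing $f_{i\alpha} = {}_L\lan P|U_L|i\ran_{L_a}|\alpha\ran_{\wb{L}_a}$, I obtain the factorized form $|\mu_{ij}\ran = U_R\,|j\ran_{R_b}|v_i\ran_{\wb{R}_b}$ with $|v_i\ran = \sum_{\beta}\big(\sum_\alpha\chi_{\alpha\beta}f_{i\alpha}\big)|\beta\ran_{\wb{R}_b}$. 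The Gram condition then reads $\lan v_i|v_{i'}\ran = N^P\delta_{ii'}$, so the $\{|v_i\ran/\sqrt{N^P}\}$ form an orthonormal set of size $|a|$ inside $\mH_{\wb{R}_b}$; when $|a|<|\wb{R}_b|$ there is room to fix a spectator $|\wt\chi\ran_{\wb{R}_b^2}$ and define $W^P_{\wb{R}_b}$ by $|i\ran_{\wb{R}_b^1}|\wt\chi\ran_{\wb{R}_b^2}\mapsto |v_i\ran/\sqrt{N^P}$, which yields the stated form (i); conversely (i) manifestly makes the $|\mu_{ij}\ran/\sqrt{N^P}$ orthonormal, returning (iii).

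For (ii)$\Leftrightarrow$(iii), the forward direction is a direct construction: given orthonormality, define $\mO_R$ on the span of $\{|\mu_{ij}\ran/\sqrt{N^P}\}$ by the same matrix that $\wt\mO$ carries in the $|\wt{ij}\ran$ basis (and zero on the complement), and verify ${}_L\lan P|\wt\mO|\wt\psi\ran = \mO_R\,{}_L\lan P|\wt\psi\ran$ on basis states; orthonormality is precisely what makes $\mO_R^\dagger$ simultaneously represent $\wt\mO^\dagger$. The reverse direction is the step I expect to be the main obstacle, since it must manufacture the Gram condition out of reconstructability. My approach is to feed the matrix-unit logical operators $\wt\mO = |\wt{pq}\ran\lan\wt{rs}|$ and their adjoints into (ii), forcing $\mO_R|\mu_{ij}\ran = \delta_{ri}\delta_{sj}|\mu_{pq}\ran$ and $\mO_R^\dagger|\mu_{ij}\ran = \delta_{pi}\delta_{qj}|\mu_{rs}\ran$; evaluating $\lan\mu_{ij}|\mO_R^\dagger|\mu_{kl}\ran$ two ways gives $\delta_{pk}\delta_{ql}\lan\mu_{ij}|\mu_{rs}\ran = \delta_{ri}\delta_{sj}\lan\mu_{pq}|\mu_{kl}\ran$, and choosing $k=p,\,l=q$ shows both that $\lan\mu_{pq}|\mu_{pq}\ran$ is index-independent (call it $N^P$) and that $\lan\mu_{ij}|\mu_{rs}\ran = N^P\delta_{ri}\delta_{sj}$, which is (iii).

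Finally (iv)$\Leftrightarrow$(iii) is an entropy computation. Writing the normalized projected reference state through $|\mu_{ij}\ran$ and tracing out $R$, I find $\rho_{T_aT_b}\propto \sum_{ij,i'j'}\lan\mu_{i'j'}|\mu_{ij}\ran\,|ij\ran\lan i'j'|$, i.e.\ proportional to the Gram matrix. Under (iii) this is $\mI/(|a||b|)$, maximally mixed and of product form, giving $S_{ent}(\rho_{T_aT_b}) = \ln|\mH_{code}|$ and $I(T_a,T_b)=0$. Conversely, since $T_aT_b$ has dimension $|a||b| = |\mH_{code}|$, saturating $S_{ent} = \ln|\mH_{code}|$ forces $\rho_{T_aT_b}$ to be maximally mixed, so its Gram matrix is proportional to the identity, which is precisely (iii). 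Assembling these four equivalences closes the chain.
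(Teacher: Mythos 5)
Your proof is correct, and while it covers the same ground as the paper's, it is organized differently enough to be worth comparing. The paper proves the cycle (i)$\Rightarrow$(ii)$\Rightarrow$(iii)$\Rightarrow$(iv)$\Rightarrow$(i), whereas you use (iii) as a hub; your opening reduction of (iii) to the Gram condition $\lan\mu_{ij}|\mu_{i'j'}\ran = N^P\delta_{ii'}\delta_{jj'}$ on the projected basis is the right organizing principle and is only implicit in the paper's manipulations. For (ii)$\Rightarrow$(iii) the paper shows $\big[\wt{\mO},\,\mP_{code}P_L\mP_{code}\big]=0$ for every logical $\wt{\mO}$ and then invokes Schur's lemma; your matrix-unit computation with $\wt{\mO}=|\wt{pq}\ran\lan\wt{rs}|$ reaches the same conclusion more explicitly and has the advantage of producing the constant $N^P$ and the orthogonality structure directly rather than through an abstract proportionality. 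For closing the loop, the paper goes (iv)$\Rightarrow$(i) by recognizing the projected reference state as a purification of the maximally mixed state on $T_aT_b$ and matching it against the inherited form $U_R|j\ran_{R_b}\ {}_L\lan P|U_L|i\ran_{L_a}|\chi\ran_{\wb{L}_a\wb{R}_b}$; you instead prove (iii)$\Rightarrow$(i) directly by observing that the vectors $|v_i\ran={}_L\lan P|U_L|i\ran_{L_a}|\chi\ran_{\wb{L}_a\wb{R}_b}$ are mutually orthogonal with common norm and extending $|i\ran_{\wb{R}_b^1}|\wt{\chi}\ran_{\wb{R}_b^2}\mapsto|v_i\ran/\sqrt{N^P}$ to a unitary $W^P_{\wb{R}_b}$, and you recover (iv)$\Rightarrow$(iii) from the fact that entropy $\ln|\mH_{code}|$ on a space of that dimension is attained only by the maximally mixed state. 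The two routes are logically equivalent; yours isolates the single linear-algebra fact doing all the work, while the paper's keeps each implication closer to its physical reading. One small caveat shared by both arguments: they tacitly assume $N^P>0$, i.e.\ that $\lan P|$ does not annihilate the code subspace, without which the orthonormal set $\{|v_i\ran/\sqrt{N^P}\}$ and the normalized reference state are undefined.
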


\begin{proof}

\ \\
\begin{itemize}

\item \emph{(i) $\implies$ (ii)}: 

For any logical operator $\wt{\mO}$ we can define an operator
\begin{align}
\mO_R = U_R \left( W^P_{\wb{R}_b} \otimes \mI_{R_b}   \right) \mO_{\wb{R}^1_b R_b}  \left( \left( W^P_{\wb{R}_b}\right)^\dagger \otimes \mI_{R_b}   \right) U_R^\dagger
\end{align}
where $ \mO_{\wb{R}^1_b R_b} $ has support only on $\wb{R}^1_b R_b$ and has the same matrix elements as $\wt{\mO}$. This immediately implies the second property.

\item \emph{(ii) $\implies$ (iii)}:

For all logical operators $\wt{\mO}$ we have
\begin{align}
\wt{\mO} \mP_{code} | P\ran_L \lan P | \mP_{code} &=   \mP_{code} \wt{\mO} | P\ran_L \lan P | \mP_{code} \\
&=   \mP_{code} | P\ran_L  \mO_R \  {}_L\lan P | \mP_{code} \\
&=   \mP_{code}  | P\ran_L \lan P | \wt{\mO} \mP_{code} \\
&=   \mP_{code}  | P\ran_L \lan P | \ \mP_{code}  \wt{\mO}  
\end{align}
We used the property $\emph{(ii)}$ twice. Therefore $\big[ \wt{\mO},  \mP_{code}  P_L \mP_{code} \big] = 0$ for all operators acting with the code subspace. Schur's lemma then guarantees  
\begin{align}
\mP_{code} \  P_L  \ \mP_{code} \  \propto \  \mP_{code}
\end{align}
The left hand side being a positive operator determines the proportionality constant to be a positive really number we can call $N^P$.

\item \emph{(iii) $\implies$ (iv)}:

By direct computation we have
\begin{align}
\rho_{T_a T_b} &=  {1 \over |a| |b|}  \sum_{i i'   j  j'} | i j \ran_{T_a T_b} \lan i' j' |    { {}_{LR} \lan i' j' | P \ran_L \lan P | i j \ran_{LR} \over N^P  } \\
&=  {1 \over |a| |b|} \sum_{i i'   j  j'} | i j \ran_{T_a T_b} \lan i' j' |    \delta_{i i'} \delta_{j j'} \\
&=  {1 \over |a| |b|} \sum_{i    j  } | i j \ran_{T_a T_b} \lan i j | \\
&= \rho_{T_a} \otimes \rho_{T_b}
\end{align}
where in the first step we used $ \mP_{code} \  P_L  \  \mP_{code} \  = N^P \   \mP_{code}$. This factorized density matrix ensures that the mutual information between $T_a$ and $T_b$ vanishes. Moreover, the total density matrix is maximally mixed with dimension $|\mH_{code}|$, and therefore
\begin{align}
S_{ent}(\rho_{T_a T_b}) = \ln |\mH_{code}|
\end{align}

\item \emph{(iv) $\implies$ (i)}:

Inherited from the original unprojected code we have that
\begin{align}
{ {}_L \lan P | \wt{i j} \ran_{LR}} =  U_R   | j \ran_{R_b} \  {}_L \lan P |  U_L |i \ran_{L_a} | \chi \ran_{\wb{L}_a \wb{R}_b}
\end{align}
and therefore  we need to show 
\begin{align}
 {}_L \lan P |  U_L |i \ran_{L_a} | \chi \ran_{\wb{L}_a \wb{R}_b} = \sqrt{N^P}   W^P_{\wb{R}_b}  | i \ran_{\wb{R}_b^1} | \wt{\chi} \ran_{\wb{R}_b^2} 
\end{align}
For some unitary $W^P_{\wb{R}_b}$ on $\wb{R}_b$ and numerical factor $\sqrt{N^P}  $.  This can only be true if $|L_a|=|a| < |\wb{R}_b|$. Dividing $|\wb{R}_b|$ by $|a|$ we get $|\wb{R}_b^2|$ with remainder $|\wb{R}_b^3|<|a|$. Therefore we can consider the Hilbert space factorization $\mH_{\wb{R}_b} = \left( \mH_{\wb{R}_b^1} \otimes \mH_{\wb{R}_b^2} \right)  \oplus \mH_{\wb{R}_b^3}$, with $|\wb{R}_b^1| = |a|$.

The state ${}_L \lan P | \phi \ran$ is a purification of the maximally mixed density matrix of the $T_a T_b$ subsystem, which after normalization must be of the form
\begin{align}
{ {}_L \lan P | \phi \ran \over \sqrt{ N^P_\phi}}  &= \sum_{i j} | i j \ran_{T_a T_b}  V_R  | i j\ran_{R} \\
&= \sum_{i j} | i j \ran_{T_a T_b}  V_R  | j \ran_{R_b}  | i \ran_{\wb{R}_b^1}      | \wt{\chi} \ran_{\wb{R}_b^2}  
\end{align}
For some $V_R$ to be determined based on the Hilbert space factorization of $R$. Requiring this to be equal to the projected reference state we must have
\begin{align}
U_R   | j \ran_{R_b} \  {}_L \lan P |  U_L |i \ran_{L_a} | \chi \ran_{\wb{L}_a \wb{R}_b} =  \sqrt{N^P_\phi} V_R  | j \ran_{R_b}  | i \ran_{\wb{R}_b^1}      | \wt{\chi} \ran_{\wb{R}_b^2}  
\end{align}
for all $ j$, which forces $V_R$ to satisfy
\begin{align}
U_R^\dagger V_R =  W^P_{\wb{R}_b} \otimes \mI_{R_b}  
\end{align}
for some $W^P_{\wb{R}_b}$. It also follows that we should identify the constants $N^P_\phi = N^P$.

\end{itemize}

\end{proof}

\subsection{Operator Algebra Quantum Error Correction from Projected Subsystem Codes}
\label{oap}

The projected subsystem code of the previous section is not quite realized by the KM construction of projecting out one side of the SYK thermofield double. The issue is that the projections considered in KM act nontrivially within the code subspace. Indeed, correlation functions of simple operators receive a modification, for example the off-diagonal fermion correlation functions with and without the projection are:
\begin{align}
&\lan \beta | \mI_L \otimes \psi^1_R(t_1) \psi^2_R(t_2) | \beta \ran \sim \mO(1/N^q) \\
&\lan \beta | P_L \otimes \psi^1_R(t_1) \psi^2_R(t_2) | \beta \ran \sim G_\beta(t_1, i\beta/2) G_\beta(t_2, i\beta/2)
\end{align} 
Therefore, the projection on the code subspace of $P_L$  does not act identically within the code subspace
\begin{align}
 \mP_{code} \  P_L  \  \mP_{code} \  \slashed{\propto}  \   \mP_{code}
\end{align}
This immediately precludes the complete recovery of the state of the code subspace prior to the projection. However, as we will see, it still allows us to recover a subalgebra of logical operators, namely all operators which satisfy
\begin{align}
\big[  \mP_{code} \  P_L  \  \mP_{code}, \wt{\mO}   \big] = 0
\end{align}
This kind of QEC has appeared before in \cite{beny2007generalization, beny2007quantum} and is called Operator Algebra Quantum Error Correction (OAQEC), and was utilized in \cite{Dong:2016eik, Harlow:2016vwg, Almheiri:2014lwa}. 

This condition is motivated from the bulk picture of the brane on the $t = 0$ slice being localized near the boundary and would therefore commute with spacelike separated operators on that same slice. It would  then apply to any bulk operator, inside or outside the horizon, that is dressed to the remaining boundary. The situation is not so clear for the left dressed operators, as those naively do not commute with the projection operator. We conjecture that, in some sense, the part of the operator that extends past the location of the brane into the bulk does commute with the projection, but we fully acknowledge the difficulty of squaring this with bulk diffeomorphism invariance.

Moving on, we will prove the following theorem:
\begin{theorem} \label{them341}
Consider a subsystem code for the encoding of a code subspace $\mH_{code} = \mH_a \otimes \mH_b$ in a larger physical Hilbert space $\mH_L \otimes \mH_R$ with the properties described above. Consider also a  complete projection $P_L \equiv  | P \ran_L\lan P |$ on the subsystem $L$. For any logical operator $\wt{\mO}$ the following statements are equivalent:
\begin{enumerate}[label=(\roman*)]

\item There exists an operator $O_R$, and its Hermitian conjugate $O_R^\dagger$,  with support on $R$ such that
\begin{align}
{}_L \lan P | \wt{\mO} | \tilde{\psi} \ran_{LR} &= O_R  \ {}_L \lan P | \tilde{\psi} \ran_{LR} \\
{}_L \lan P | \wt{\mO}^\dagger | \tilde{\psi} \ran_{LR} &= O_R^\dagger  \ {}_L \lan P | \tilde{\psi} \ran_{LR}
\end{align}
for all states $|\wt{\psi} \ran_{LR} \in \mH_{code}$.

\item The logical operator $\wt{\mO}$ commutes with the projection on the code subspace of the projection operator $P_L$
\begin{align}
\big[  \mP_{code} \  P_L  \  \mP_{code}, \wt{\mO}   \big] &= 0 
\end{align}

\end{enumerate}

\end{theorem}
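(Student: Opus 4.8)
The plan is to recast the whole equivalence in terms of a single linear map and then lean on its polar decomposition, mirroring the complementary-recovery structure of standard OAQEC. Introduce $V:\mH_{code}\to\mH_R$ defined by $V|\wt\psi\ran = {}_L\lan P|\wt\psi\ran_{LR}$, i.e.\ the operation that projects an encoded code state onto $|P\ran_L$; recall that $\wt{\mO}$ is logical and hence maps $\mH_{code}$ to itself. The key preliminary identity, evaluated on encoded basis states, is
\begin{align}
\lan\wt{i'j'}|\,P_L\,|\wt{ij}\ran = \lan V\wt{i'j'}\,|\,V\wt{ij}\ran_R ,
\end{align}
so that, as an operator on the code subspace, $\mP_{code}\,P_L\,\mP_{code} = V^\dagger V \equiv M$. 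Once this is in hand, statement (i) becomes the assertion that there is $O_R$ on $R$ with $V\wt{\mO}=O_R V$ and $V\wt{\mO}^\dagger=O_R^\dagger V$ on $\mH_{code}$, while statement (ii) is simply $[M,\wt{\mO}]=0$.

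For the forward direction (i) $\Rightarrow$ (ii) I would sandwich $[M,\wt{\mO}]$ between arbitrary code states and use the operator relation twice. On one side,
\begin{align}
\lan\wt\phi|\,V^\dagger V\,\wt{\mO}\,|\wt\psi\ran = \lan V\wt\phi\,|\,O_R\,V\wt\psi\ran ,
\end{align}
and, moving $\wt{\mO}$ onto the bra and invoking the Hermitian-conjugate relation $V\wt{\mO}^\dagger = O_R^\dagger V$,
\begin{align}
\lan\wt\phi|\,\wt{\mO}\,V^\dagger V\,|\wt\psi\ran = \lan V\wt{\mO}^\dagger\wt\phi\,|\,V\wt\psi\ran = \lan O_R^\dagger V\wt\phi\,|\,V\wt\psi\ran = \lan V\wt\phi\,|\,O_R\,V\wt\psi\ran .
\end{align}
The two right-hand sides agree, so $\lan\wt\phi|[M,\wt{\mO}]|\wt\psi\ran=0$ for all code states, which is (ii). The point worth flagging is that one genuinely needs \emph{both} the relation for $\wt{\mO}$ and the one for $\wt{\mO}^\dagger$; with only one of them the cancellation fails.

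For the converse (ii) $\Rightarrow$ (i) I would build $O_R$ explicitly from the polar decomposition $V = W M^{1/2}$, where $M^{1/2}=\sqrt{V^\dagger V}$ acts on $\mH_{code}$ and $W:\mH_{code}\to\mH_R$ is a partial isometry whose initial projector $W^\dagger W$ is the projection onto $\mathrm{supp}\,M$. The candidate is $O_R := W\,\wt{\mO}\,W^\dagger$, which is manifestly supported on $R$, with $O_R^\dagger = W\wt{\mO}^\dagger W^\dagger$. Verifying (i) then collapses to
\begin{align}
O_R V = W\,\wt{\mO}\,(W^\dagger W)\,M^{1/2} = W\,\wt{\mO}\,M^{1/2} = W\,M^{1/2}\,\wt{\mO} = V\wt{\mO} ,
\end{align}
where the middle equality uses that $M^{1/2}$ has range in $\mathrm{supp}\,M$, and the crucial third equality uses that $[M,\wt{\mO}]=0$ forces $[M^{1/2},\wt{\mO}]=0$ since $M^{1/2}$ is a continuous function of $M$; the $\wt{\mO}^\dagger$ relation is identical.

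I expect the delicate point to lie entirely in this last direction, namely the possibility that $M=\mP_{code}P_L\mP_{code}$ fails to be invertible. Because the KM projection acts nontrivially inside the code subspace, $M$ can have a kernel, $V$ need not be injective, and $W$ is only a partial isometry rather than a genuine isometry. The construction still closes because $[M,\wt{\mO}]=0$ means $\wt{\mO}$ preserves both $\ker M$ and $\mathrm{supp}\,M$: on $\ker M$ the map $V$ annihilates and both sides of (i) vanish, while on $\mathrm{supp}\,M$ the factor $W^\dagger W$ acts as the identity, so the chain above goes through. Checking that these pieces glue consistently, and that $O_R$ is well defined independently of basis choices, is the only genuinely subtle part; everything else is bookkeeping around the identity $\mP_{code}P_L\mP_{code}=V^\dagger V$.
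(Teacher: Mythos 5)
Your proof is correct. The (i) $\Rightarrow$ (ii) direction is essentially the paper's argument (which it inherits from the earlier theorem): sandwich the commutator and use the intertwining relation for both $\wt{\mO}$ and $\wt{\mO}^\dagger$; your remark that both relations are genuinely needed is exactly the point the paper's chain of equalities relies on. The (ii) $\Rightarrow$ (i) direction, however, takes a genuinely different route. The paper passes through the reference state $|\phi\ran$, identifies $\mP_{code}P_L\mP_{code}$ with the transpose of $\rho_{T_aT_b}=\alpha\alpha^\dagger$, and builds $O_R=\alpha^T\mO(\alpha^{-1})^T$ in coordinates using the right inverse $\alpha^{-1}=\alpha^\dagger\rho_{T_aT_b}^{-1}$ on the support; it must then separately verify, via a chain of manipulations using $[\mO,\rho^T_{T_aT_b}]=0$, that the two constructed operators really are Hermitian conjugates of one another. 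Your version is the coordinate-free counterpart: your $\alpha$ is the matrix of $V$, and the paper's $\alpha^T\mO(\alpha^{-1})^T$ agrees with your $W\wt{\mO}W^\dagger$ on the range of $V$, which is all that statement (i) sees. What the polar decomposition buys you is that $O_R^\dagger=W\wt{\mO}^\dagger W^\dagger$ is a Hermitian conjugate pair by construction, so the paper's separate Hermiticity check disappears, and the kernel/support bookkeeping (which the paper handles somewhat implicitly through the phrase ``within this subspace $\alpha$ has a right inverse'') is made explicit by the partial-isometry structure of $W$. One small point worth stating rather than leaving implicit: $[M,\wt{\mO}]=0$ with $M$ Hermitian gives $[M,\wt{\mO}^\dagger]=0$ by taking adjoints, which is what licenses the identical computation for the daggered relation and the claim that $\wt{\mO}^\dagger$ also preserves $\ker M$.
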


\begin{proof}
\ \\

\begin{itemize}

\item \emph{(i) $\implies$ (ii)}:

This is identical to the proof of the \emph{(ii) $\implies$ (iii)} implication of theorem \ref{first}.

\item \emph{(ii) $\implies$ (i)}:
On the original subsystem code reference state
\begin{align}
| \phi \ran =  {1\over \sqrt{ |a| |b|}}  \sum_{i j} | i j \ran_{T_a T_b} \ {| \wt{i j} \ran_{LR}  }
\end{align}
we have 
\begin{align}
\wt{\mO} | \phi \ran = \mO^T_{T_a T_b} | \phi \ran
\end{align}
where $\mO^T$ is the transpose of $\wt{\mO}$ but with support on $T_a T_b$. Similarly for the Hermitian conjugate $\wt{\mO}^\dagger$ and $\left(\mO^\dagger_{T_a T_b}\right)^T$.  Notice that the projection of $P_L$ on the code subspace 
\begin{align}
 \mP_{code} \  P_L  \  \mP_{code} = \sum_{i i' j j'} \left( \lan \wt{i j} | P \ran_L\lan \ P | \wt{i' j'} \ran   \right) | \wt{i j} \ran \lan \wt{i' j'} |
\end{align}
has the same matrix elements as the transpose of the reference $T_a T_b$ density matrix of the normalized state
\begin{align}
{{}_L\lan P |  \phi \ran \over \sqrt{ N^P_\phi}} =  {1\over \sqrt{N^P_\phi |a| |b|}}  \sum_{i j} | i j \ran_{T_a T_b} \ { {}_L\lan P | \wt{i j} \ran_{LR}  }
\end{align}
given by
\begin{align}
\rho_{T_a T_b} =  {1\over |a| |b|} \sum_{i i' j j'} \left( {  \lan \wt{i' j'} | P \ran_L \lan P | \wt{i j} \ran   \over N^P_\phi} \right)     | i j \ran_{T_a T_b} \lan i' j' |
\end{align}
This shows the equivalence of  
\begin{align}
\big[  \mP_{code} \  P_L  \  \mP_{code}, \wt{\mO}   \big] = 0 &\Longleftrightarrow \big[ \rho_{T_a T_b}, \mO^T_{T_a T_b}   \big] = 0
\end{align}
and similarly for $\wt{\mO}^\dagger$ and $\left(\mO^\dagger_{T_a T_b}\right)^T$ since the density matrices and projectors are Hermitian.

The next step is to show that this implies the existence of $O_R$ such that
\begin{align}
\mO^T_{T_a T_b}  \  {}_L\lan P |  \phi \ran &= O_R \  {}_L\lan P |  \phi \ran \\
\left( \mO^\dagger_{T_a T_b} \right)^T  \  {}_L\lan P |  \phi \ran &= O_R^\dagger \  {}_L\lan P |  \phi \ran
\end{align}
This has already been proven in \cite{Almheiri:2014lwa}, but we reiterate it here for completeness. We show this by constructing such an $O_R$ and $O_R^\dagger$, and show that they are indeed Hermitian conjugates. For the sake of notational simplicity, we redefine $| I \ran \equiv | i j \ran$. Moreover, we work in a different basis for the $R$ subsystem such that 
\begin{align}
 {}_L \lan P | \phi \ran^N \equiv {{}_L \lan P | \phi \ran \over \sqrt{N^P_\phi}} = \sum_{I K} \alpha_{K I} | I \ran_{T_a T_b} | K \ran_{R}
\end{align}
where $\alpha_{K I }$ can be thought of as a $  |\mH_{code}| \times |\mH_R| $ rectangular matrix. The density matrix of the reference subsystem is given by $\rho_{T_a T_b} = \alpha \alpha^\dagger$. The commutativity of $\mO^T_{T_a T_b}$ with this density matrix ensures that it preserves the subspace of support of $\rho_{T_a T_b}$ on $\mH_{code}$.  Within this subspace $\alpha$ has a right inverse $\alpha^{-1} =  \alpha^\dagger \rho_{T_a T_b}^{-1}$. This allows us to construct $O_R$ as follows:
\begin{align}
\mO^T_{T_a T_b}  \  {}_L\lan P |  \phi \ran^N &= \sum_{I J K}  \left( \mO^T \right)_{J I}  \alpha_{K I}  | J \ran_{T_a T_b} | K \ran_{R} \\
&= \sum_{J K} \alpha_{M J}  | J \ran_{T_a T_b}     \sum_{I L M} \alpha^{-1}_{M L}    \left( \mO^T \right)_{L I}   \alpha_{K I}   | K \ran_{R} \\
&= \sum_{J K} \alpha_{M J}  | J \ran_{T_a T_b}     \left(    \alpha^{-1} \mO^T  \alpha \right)^T | M \ran_{R} \\
&=  \left(  \alpha^T    \mO  ( \alpha^{-1})^T  \right)_{R} \  {}_L\lan P |  \phi \ran^N
\end{align}
and similarly for the Hermitian conjugate $\left( \mO_{T_a T_b}^\dagger \right)^T$. Therefore we have
\begin{align}
O_R &=    \alpha^T    \mO  ( \alpha^{-1})^T \\
O_R^\dagger &=    \alpha^T    \mO^\dagger  ( \alpha^{-1})^T
\end{align}
All we have left to show is that the right hand sides of these expressions truly are Hermitian conjugates of one another. This is easy to see as follows. Starting with the formula for $O_R^\dagger$ and taking the conjugate we get
\begin{align}
(O_R^\dagger)^\dagger &= \left( \alpha^T    \mO^\dagger  ( \alpha^{-1})^T  \right)^\dagger \\
&=   ( \alpha^{-1})^* \mO \alpha^* \\
&= ( \alpha^{-1})^* \mO \alpha^* \alpha^T (\alpha^{-1})^T\\
&= ( \alpha^{-1})^* \alpha^* \alpha^T \mO  (\alpha^{-1})^T\\
&=  \alpha^T \mO  (\alpha^{-1})^T\\
&= O_R
\end{align}
where we used $[ \mO,\alpha^* \alpha^T ] = [ \mO,\rho^T_{T_a T_b} ] =0$ in going between the third and fourth steps.

\end{itemize}

\end{proof}

\subsection{Reconstruction as Teleportation or Active Quantum Error Correction}

The QEC codes used to describe subregion-subregion duality in \cite{ Almheiri:2014lwa, Dong:2016eik, Harlow:2016vwg} belong to the broad class of Erasure codes. These codes are \emph{passive} QEC codes in that they do not involve an error diagnostic step after which a suitable recovery operation is implemented. It is assumed in these codes that one has prior knowledge of which subsystem is going to be corrupted and only then can the information about the code subspace (or a subalgebra) be recovered from its complement. This is naturally suited for the question of subregion-subregion duality in AdS/CFT.

The codes studied in this paper involve a recovery procedure which depends crucially on the details of the projection, $P_L$, and must involve an \emph{active} diagnostic step in order to determine which $P_L$ was acted with on the $L$ subsystem\footnote{We are grateful for discussions on this point with D. Poulin who  demanded a more interesting example of QEC in AdS/CFT beyond passive erasure codes, and hope to have demonstrated such an example in this work.}. There are two equivalent ways of phrasing the recovery procedure: Either as quantum teleportation where knowledge of a measurement result on the entangled $LR$ system in some basis $P^k_L$ informs the correct teleportation protocol, or as an active QEC involving a diagnostic step on the already projected $L$ subsystem to determine which $P^k_L$ was acted with. This latter interpretation requires that we know before hand the basis of these `errors' or projectors. To connect our codes to these interpretations, let's first focus on the case discussed in \ref{pqscc} where the code subspace is completely recovered after the projection. Consider a message $| \psi \ran_m \in \mH_m$ which we choose to encode into the code subspace as
\begin{align}
| \psi \ran_{m} | 0 \ran_{LR} \rightarrow | 0 \ran_m |\wt{\psi} \ran_{LR}
\end{align}
We keep general how much of the state $| \psi \ran$ can be decoded from $L$ or $R$. Since our protocols allow for the information initially in $L$ to be decoded from $R$, the teleportation should be thought as sending part of the message initially encoded in $L$ to $R$. Then, we can append to our physical system an ancilla subsystem $e$ which keeps track of the left measurement:
\begin{align}
 |\wt{\psi} \ran_{LR} | 0 \ran_{e} \rightarrow \sum_{k} P_L^k |\wt{\psi} \ran_{LR} | k \ran_{e}
\end{align}
By measuring $e$ we can determine which projection operator was acted on the physical system and then, assuming the $P_L^k$'s satisfy the conditions of the previous subsections, we can proceed to decode the information of the code subspace.

We can make this look like active quantum error correction by throwing out the information about the ancilla subsystem $e$. Considering a more general state $\wt{\rho} \in \mH_{code} \subset \mH_L \otimes \mH_R$, the evolution of the system is obtained by tracing out $e$ to get 
\begin{align}
\wt{\rho} \rightarrow  \mP \circ\wt{\rho} \equiv \sum_k P_L^k \ \wt{\rho} \ P_L^k 
\end{align}
This evolution is implemented by a quantum channel or a POVM with elements, or `Kraus' operators, $P_L^k$. Assuming that each individual projection can be corrected in the sense of \ref{pqscc}, we can write 
\begin{align}
\mP \circ\wt{\rho} = \sum_k   P^k_L \otimes \left[  U_R W^{P_k}_{\wb{R}_b}  \left( \rho_{\wb{R}_b^1 R_b} \otimes \chi_{\wb{R}_b^2}  \right)  W^{P_k \dagger}_{\wb{R}_b} U_R^\dagger  \right]
\end{align}
Since the different projection operators are orthogonal, we can define a recovery channel with elements
\begin{align}
R_m = P_L^m \otimes W^{P_m \dagger}_{\wb{R}_b} U_R^\dagger
\end{align}
where the projector on the $L$ tensor factor is used to diagnose the error, and the other decodes the message. This clearly decodes the information successfully to give
\begin{align}
\mR \circ \mP \circ \wt{\rho} = \rho_{\wb{R}_b^1 R_b} \otimes \chi_{\wb{R}_b^2}  
\end{align}
where $ \rho_{\wb{R}_b^1 R_b} $ has the same matrix elements as $\wt{\rho}$.

A similar diagnostic procedure can be implemented for the case presented \ref{oap} when  only a subalgebra acting on the code subspace is preserved. The channel $\mO$ acting within the code subspace is preserved or recovered if we can find a corresponding channel $\mO^\mR$ such that
\begin{align}
\mP \circ \mO \circ \wt{\rho} = \mO^{\mR} \circ \mP \circ \wt{\rho}
\end{align}
It's not hard to see that $\mO^\mR$  composed of
\begin{align}
\mO^\mR_m = P_L^m \otimes O^{P_m}_R
\end{align}
would ensure this, where the operator $O^{P_m}_R$ is that constructed in the proof of theorem \ref{them341}.

\section{An Apologia for State Dependence}

We discuss in this section the relation of this framework to previous proposals for the black hole interior \cite{Papadodimas:2012aq, Papadodimas:2013jku, Papadodimas:2015jra, Verlinde:2012cy, Verlinde:2013qya, Verlinde:2013uja}, and address the objections of these proposals raised in \cite{Almheiri:2013hfa,Bousso:2013ifa, Marolf:2013dba, Harlow:2014yoa, Marolf:2015dia} in light of this work. We also comment on the relation of our construction to ER=EPR \cite{Maldacena:2013xja} and provide a possible mechanism for transferring information between two black holes connected via a wormhole. 

\subsection{Arguments Against State Dependence}

We review some of the issues raised against state dependence in \cite{Almheiri:2013hfa, Bousso:2013ifa, Marolf:2013dba, Harlow:2014yoa, Marolf:2015dia} and discuss how they are averted in our construction. Some of these points were already presented in \cite{Papadodimas:2013jku}, for example. All issues here will pertain to large pure black holes in AdS that have come into equilibrium with their Hawking radiation.

{\bf \flushleft $\wt{b}^\dagger$ and the Finite Density of States of the CFT / Typicality}
{\flushleft As discussed in \cite{Almheiri:2013hfa, Marolf:2013dba}, there is a conflict between the algerba of the boundary dressed interior creation and annihilation operators, $\wt{b}_w$ and $\wt{b}^\dagger_w$, and the finite density of states of the dual CFT. The conflict is between the following two statements
\begin{align}
\big[H, \wt{b}^\dagger_w \big] = - w \wt{b}^\dagger_w \ \ \mathrm{and} \ \ \left( {1 \over 1 + \wt{b}^\dagger_w \wt{b}_w} \wt{b}_w \right) \wt{b}^\dagger_w = 1
\end{align}
The first relation is the statement that $\wt{b}^\dagger_w$ lowers the energy of the CFT and is therefore a many-to-one map from the subspace of states of energy $E_0$ to that of energy $E_0 - w$. This reduces the number of states by a factor of $e^{- \beta w}$, which is $\mO(1)$ for $w \sim 1/\beta$. This necessitates that $\wt{b}^\dagger_w$ cannot be an invertible map! However, the second statement shows precisely how the standard low energy QFT algebra ensures the existence of an inverse map. }

As discussed in \cite{Papadodimas:2013jku}, this paradox is easily avoided by taking the interior operators to be state dependent. For example, the operator $\wt{b}^{ \{ s \} \dagger}_w$ associated to the microstate $| B_s^\beta \ran$ will not have the interpretation of a simple mode behind the horizon when acted on another microstate $| B_s'^\beta \ran$ where $s \neq s'$, and will most probably raise the energy of the boundary.

It is interesting to note that conflict does not arise for the the brane dressed versions of $\wt{b}^{ \{ s \} \dagger}_w$, since those do not modify the energy of the boundary to leading order in $N$. Nevertheless, those operators as well are state dependent.

The argument from typicality is also averted by state dependence. In short, the typicality argument involves computing the microcanonical average at some large energy $E_0$ of the Kruskal number operator $N_A = a^\dagger_w a_w$ at the horizon in the basis of Schwarzschild mode number eigenbasis
\begin{align}
\lan N_a \ran_{E_0} = \sum_{n_b} \lan n_b | N_a | n_b \ran
\end{align}
The microcanonical average is basis independent allowing us to choose this particular basis. Now, from the Bogoliubov transformation relating $a_w$ and $b_w$ it is clear that the expectation value of $N_a$ is non-zero in any eigenstate of $N_b$, and therefore
\begin{align}
\lan n_b | N_a | n_b \ran \sim \mO(1)
\end{align} 
The fact that $N_a$ is a positive operator ensures there are no cancellations. This result implies that typical states of the microcanonical ensemble have firewalls.

This argument breaks down for state dependent constructions because the operator $N_a$ is composed of interior operators and therefore is not a linear operator in the Hilbert space that one can simply take the average of. While the previous state dependent constructions want to ensure a smooth horizon for typical states \cite{Papadodimas:2012aq, Papadodimas:2013jku, Papadodimas:2015jra}, we take the perspective that there is no general statement that one can make about arbitrary typical states. We do show how an over-complete basis of typical looking states (where all exterior observables have thermalized) do not have singular horizons.

{\bf \flushleft The Frozen Vacuum and Violations of the Born Rule}
{ \flushleft Another objection to state dependent constructions is the inability of those constructions to find anything else other than the vacuum at the horizon \cite{Bousso:2013ifa}. This criticism does not apply to our construction since the nature of the horizon follows from that of the eternal wormhole prior to the projection, as in section \ref{sec2projlongtfd}.}

Also, the requirement that all typical states have smooth horizons has been shown to lead to violations of the Born rule \cite{Harlow:2014yoa, Marolf:2015dia}. In particular, it is shown how to construct two states, one without a firewall and one with, which are almost parallel in the Hilbert space. This again does not apply in our case since it is not a statement about typical states in general. Consider for example a smooth horizon state, say $|B_s^\beta \ran$, and a unitary $U_s$ which inserts a shockwave just behind the horizon that is dressed to the brane and therefore commutes with the Hamiltonian. We want to interpret the state $U_s | B_s^\beta \ran$ as a black hole with a firewall. Using the techniques of SYK and assuming that the $U_s$ is invariant under the diagonal spin group discussed in section \ref{sec2projlongtfd}, this overlap reduces to the one point function of a unitary $V$ which inserts a shockwave in the TFD:
\begin{align}
\lan B_s^\beta | U_s | B_s^\beta \ran = \lan \beta | V | \beta \ran
\end{align}
This is a one point function in the TFD state and is small if not zero.

\subsection{Relation to State-Dependent Constructions of the Interior}

We first give a quick review of state-dependent constructions of the interior following the formalism of \cite{Papadodimas:2012aq, Papadodimas:2013jku, Papadodimas:2015jra} for definiteness. We will also comment on \cite{Verlinde:2012cy, Verlinde:2013qya, Verlinde:2013uja} which features aspects of QEC.

This proposal is concerned with reconstructing the interiors of large black holes in AdS that have come into equilibrium with their own Hawking radiation. The idea is to begin with a typical state $|\Psi_0 \ran$ drawn from some microcanoncal ensemble at some high energy above the Hawking-Page transition \cite{Hawking:1982dh} of width that doesn't scale with $1/G_N$. Then one considers the algebra of simple operators $\mO_w \in {\cal A}$, written here in fourier modes, dual to a set of low energy operators acting on the exterior of the black hole. ${\cal A}$ is not a closed algebra since it does not include operators composed of products of $1/G_N$ simple operators or larger. This is then used to define a `code subspace' spanned by elements $\mH_{code} = \mathrm{span}\{ {\cal A} | \Psi_0 \ran \}$. Such typical states $| \Psi_0 \ran$ are also called `equilibrium' states  in that correlation functions of operators in ${\cal A}$ are given by their thermal expectation values, as expected from ETH \cite{PhysRevA.43.2046, PhysRevE.50.888, Srednicki:1995pt}. It is then argued that one expects the representation of ${\cal A}$ to be reducible in $\mH_{code}$ allowing for the existence of a nontrivial commutant ${\cal A}'$ of ${\cal A}$. Using the theory of Tomita-Takesaki (see \cite{Witten:2018zxz} for a review), the interior operators $\wt{\mO}_w$ are identified as some subalgebra of ${\cal A}'$ which satisfies the following conditions
\begin{align}
\wt{\mO}_w | \Psi_0 \ran &= e^{-{\beta H\over 2}}\mO^\dagger_w e^{{\beta H\over 2}} | \Psi_0 \ran \\
\wt{\mO}_w \mO_{w_1} ... \mO_{w_n} | \Psi_0 \ran &=\mO_{w_1} ... \mO_{w_n} \wt{\mO}_w | \Psi_0 \ran  \\
[ H, \wt{\mO}_w]\mO_{w_1} ... \mO_{w_n}| \Psi_0 \ran &= w \wt{\mO}_w  \mO_{w_1} ... \mO_{w_n} | \Psi_0 \ran  \\
\end{align}
The operators $\wt{\mO}_w$ are `mirrored' versions of the exterior operators $\mO_w$ defined by these conditions. This construction is motivated by the analogy to the TFD double state,  which due to the entanglement between the left and the right sides we have 
\begin{align}
\mO_L | \beta \ran = e^{-{\beta H\over 2}}\mO^\dagger_R e^{{\beta H\over 2}} | \beta \ran 
\end{align}
for any $\mO_L$ and a corresponding $\mO_R$. From these definitions one finds that  correlation functions involving small numbers of operators from ${\cal A} \cup {\cal A}'$ are given by those in the thermal state, a signal taken to say that the region near the horizon is identical to that in eternal black hole. We therefore see that the construction produces an algebra of interior looking operators whenever observables composed of the simple exterior algebra have all thermalized.

The idea of the interior operators being related to the left operators is in the same spirit as the proposal of this paper. Indeed, the interior operators constructed via QEC satisfy a similar set of constraints as the mirror conditions above
\begin{align}
 {}_L \lan P | \mO_L | \beta \ran_{LR}  &= \wt{\mO}_R \  {}_L \lan P | \beta \ran_{LR} \\
  {}_L \lan P | \mO_L \mO_R^1... \mO_R^n | \beta \ran_{LR}  &= \mO_R^1... \mO_R^n \wt{\mO}_R \  {}_L \lan P | \beta \ran_{LR} 
\end{align}
where $\wt{\mO}_R = \alpha^T \mO \left( \alpha^{-1} \right)^T $ as explained in the previous section. We also have that
\begin{align}
  0 = {}_L \lan P |\big[ O_L, O_R  \big] \mO_R^1... \mO_R^n | \beta \ran_{LR} =    \big[ \wt{\mO}_R, O_R  \big] \mO_R^1... \mO_R^n \ {}_L \lan P | \beta \ran_{LR}
\end{align}
which is just the statement that operators which commute in the unprojected code subspace continue to commute after the projection (assuming both satisfy the recoverability condition of section \ref{oap}). The commutator with the Hamiltonian condition also follows, but it depends on whether the interior operator is dressed to the brane or boundary, where it  will respectively either commute or not.

There are crucial differences though. An obvious one is that the mirroring procedure does not preserve the Hermiticity property of the operators; Hermitian conjugate pairs do not mirror into Hermitian conjugate pairs. In our discussion, this was guaranteed by QEC and proven in section \ref{oap}. It's not clear how much of a problem this is (if at all), but one might worry that since positive operators do not mirror to positive operators in the interior, observables such as the number operator might produce unphysical results in the interior.

The mirroring procedure is reliant on considering an equilibrium state for which all low energy external observables have thermalized. This was not necessary for our construction; we found that we can determine the dictionary both for atypical states of section \ref{sec2projtfd} by projecting on the TFD and for typical states obtained by acting with a series of OTO shockwaves prior to the projection.

Another issue with the mirror construction is that the nature of the interior is determined by the construction rather than by the considered equilibrium state. This was discussed in the previous subsection with regards to the frozen vacuum objection. In our construction the nature of the horizon is predetermined, in part, by the state of the two sided wormhole prior to the projection. We could for instance consider a state which contains a shockwave which skims the horizon from the left hand side and then act with the left projection, just like those in figure \ref{sec2projlongtfd}. In these long wormholes, the right external operators are not sensitive to any of the left shockwaves and, as argued above, will look completely thermalized making such a state indistinguishable from an equilibrium state. Therefore one can carry out the mirroring procedure in this case. However, the actual boundary dual of interior operators will be sensitive to this shockwave while the mirror construction would entirely miss it.

Finally we comment on the use of QEC in \cite{Verlinde:2012cy, Verlinde:2013qya, Verlinde:2013uja} and how it connects to the proposal of this paper. They consider a young black hole not yet maximally (or thermally) entangled with its Hawking radiation and track its state as it emits a single quantum of radiation:
\begin{align}
|\Psi \ran_B | 0 \ran_R \rightarrow \sum_{i} E_i | \Psi \ran_B | i \ran_R
\end{align}
where the state of $B$ belongs to a direct sum of Hilbert spaces of black holes of different masses, and $R$ is the external radiation Hilbert space initialized in the vacuum state $| 0 \ran_R$. This evolution is a unitary transformation acting on the $BR$ system, and therefore the operators $E_i$ must satisfy $\sum_i E_i^\dagger E_i = 1$. Upon tracing out $R$, this evolution looks like the action of an error channel
\begin{align}
\mE\left( | \Psi \ran_B \lan \Psi | \right) = \sum_i E_i | \Psi \ran_B \lan \Psi | E_i^\dagger
\end{align}
Just as in the mirror construction, the goal here is to be able to find the subsystem of $B$ that the radiation state is entangled with and identify it with interior partner Hawking mode. The key result of their work is that if one assumes that this error channel is correctable, i.e. the existence of recovery channel such that
\begin{align}
\mR \circ \mE\left( | \Psi \ran_B \lan \Psi | \right) \propto  | \Psi \ran_B \lan \Psi | 
\end{align}
then one can algorithmically find a subsystem of $B$ which behaves in the appropriate way to mimic the interior Hawking partner. As in standard QEC, this recovery procedure can be implemented on a subspace of states of $\mH_B$, i.e. a code subspace. The recoverability condition becomes $\lan m | E_i^\dagger E_j | n \ran \ \propto \ \delta_{m n}$ for any states $| m \ran$ and $| n \ran$ in the code subspace. However, this proposal again suffers from the same ambiguity issues raised above.

It should therefore be clear that the usage of QEC in this paper and in \cite{Verlinde:2012cy, Verlinde:2013qya, Verlinde:2013uja} is different, though both involve the standard quantum information framework of QEC. The origin of QEC in this paper is the interpretation of the AdS/CFT dictionary as a QEC code. Take for example the discussion of section \ref{toytensorproj}. The representation of the dictionary as a set of tensors, along with the encoding and decoding procedure of going from the bulk legs to the boundary and back, has been proposed as a toy model for the AdS/CFT dictionary by, for example, \cite{Pastawski:2015qua, Hayden:2016cfa}. The goal of the present work was to study how this dictionary is rewired by the application of the projection operator on a subsystem of the boundary.

Nevertheless, it is our view that the proposal of this paper should be viewed as a realization of the general ideas of state-dependent constructions but with more rules so as to stave off some of their inherent ambiguities.

\subsection{Monogamy of Entanglement and ER=EPR}

Next, we engineer situations to satisfy the preconditions of the monogamy of entanglement argument for firewalls \cite{Almheiri:2012rt, Mathur:2009hf, Braunstein:2009my} and see how it affects the nature of the horizon. We will do this by either explicitly considering an entangled state of a set of black hole microstates and some external system or by picking a certain microstate and allowing it to evaporate.

Consider first the set of $2^{N/2}$ black hole microstates $| B_s^\beta \ran_R$, labeled by $s$, of an SYK system $R$ all of which have smooth horizons. This is an overcomplete basis of black hole microstates of effective inverse temperature $\beta$. As discussed in section \ref{section3}, the dictionary between the bulk and boundary is understood for both the exterior modes, $b$, and interior modes $a$, where the dictionary of the latter is state dependent. 

Next, we want to consider entangling $R$ with an external system $E$, which could be another SYK system, in a state $| \Psi \ran_{RE}$ such that the reduced density matrix of $R$ is thermal. This is supposed to mimic an evaporating black hole that has reached the Page time \cite{Page:1993df, Page:1993up} and is thermally entangled with its Hawking radiation. Up to a product unitary $U_R \otimes U_E$ on the two systems, a general such state is
\begin{align}
| \Psi \ran_{RE} = \sum_{s} | B_s^\beta \ran_R | Q_s \ran_E
\end{align}
where ${}_E \lan Q_s | Q_{s'} \ran_E = \delta_{s s'}$. We can check that the reduced density matrix of $R$ is thermal by explicit computation
\begin{align}
\rho_R &= \sum_s | B_s^\beta \ran \lan B_s^\beta | \\
&= e^{-{\beta \over 2} H} \sum_s  | B_s \ran \lan B_s | e^{-{\beta \over 2} H} \\
&= e^{-{\beta} H}
\end{align}
as required. The von Neumann entropy of $\rho_R$ expressed in bulk quantities is
\begin{align}
S(\rho_R) = {A \over 4 G_N} + S_{bulk}(\rho_b)
\end{align}
where $\rho_b$ is the density matrix of the bulk quantum fields $b$. Before we justify this result, we point out that it would satisfy the preconditions of the firewall argument, namely that both the black hole horizon and the external modes $b$ are entangled with the external system $E$. By monogamy of entanglement, this would preclude $b$ from being entangled with the interior modes $a$.

Saying that we now have a firewall is too quick. The reason being that we can take the external system to be another SYK and write its states as
\begin{align}
| Q_s \ran_E = V_E | B_s \ran_E
\end{align}
for some unitary $V_E$, since the states that appear on both sides of this equation are an orthogonal set. Therefore the entangled state between $R$ and $E$ is simply
\begin{align}
| \Psi \ran_{RE}  &= V_E \sum_s | B_s^\beta \ran_R | B_s \ran_E \\
&= V_E | \beta \ran_{RE}
\end{align}
which is just a unitary transformation acting on one boundary of the standard TFD. We see that we have the reverse of the puzzle described in \ref{sec3puzzle}; the modes $a$ were initially encoded on $R$ but have somehow transferred to $E$. The dictionary has been rewired by the entanglement so that $a$ is now reconstructable in $E$. Modulo the unitary $V_E$, the bulk system $b$ continues to be purified by $a$, and their entanglement contributes to the von Neumann entropy of $\rho_R$ in the form of the FLM piece $S_{bulk}(\rho_b)$. Whether there is a firewall or not is determined by the unitary $V_E$. This demonstrates how the fluidity of the dictionary in response to the entanglement realizes the ideas of ER = EPR \cite{Maldacena:2013xja}. We got this by basically going through the SYK projected microstate construction but backwards.

This fluidity can be used to transfer information from $R$ to $E$ by means of entanglement. The basic idea is that prior to entangling $R$ with $E$, we first encode some information in the interior of the pure black hole microstates of $R$ in the modes $a$ via a state dependent unitary
\begin{align}
| B_s^\beta \ran_R \rightarrow U_s^R | B_s^\beta \ran_R
\end{align}
This unitary produces the same density matrix for the bulk fields $a$ for all $s$. Note that this is not a single unitary acted on all the different $| B_s^\beta \ran_R$ but a different one for each state. Entangling these states with the external SYK, but with $V_E = \mI_E$, it's not hard to see that we will get
\begin{align}
| \Psi \ran_{RE} = U^E | \beta \ran_{RE}
\end{align}
where $U^E$ is a truly unitary operator and acts within the code subspace of the eternal black hole on the bulk subsystem $a$. We see that the shift in the dictionary allows us to decode the new state of $a$ from the system $E$ only.

Now, it is a reasonable objection to say that we have not really transferred information from $R$ to $E$, since the encoded information in $R$ was not encoded by a single state independent unitary. Nevertheless, we will now provide a more convincing demonstration of the connection between the transfer of information and the fluidity of the dictionary. We will do this in a series of steps below, but will leave the complete quantitative analysis for future work.

{\bf \flushleft Throwing Information into the Black Hole}
{ \flushleft Consider starting with the TFD state of two SYK systems $L$ and $R$. We can inject some information via a unitary on $R$ at some early time, which proceeds to fall into the black hole.}
\begin{align}
U_R(t_I) | \beta \ran_{LR}
\end{align}
This unitary increases the energy of the right system slightly and takes it out of thermal equilibrium, without changing its von Neumann entropy. After the state thermalizes, it will reach a state where its coarse grained thermal entropy is larger than its von Neumann entropy. We call this difference $\delta S$. 

{\bf \flushleft Evaporation (1/2): Tracking the Trajectory of the Boundary Particle}

{\flushleft Consider then coupling the $R$ SYK to an external auxilliary system $X$ assumed to be at a lower temperature than $1/\beta$ so that energy flows from $R$ into $X$. There are two effects to turning on this coupling which occur in the following sequence. The first is an initial increase of energy of both systems $R$ and $X$, and then a transfer of  energy from $R$ into $X$. }

The initial increase of the energy is explained in appendix \ref{coupling}, and has to do with the fact that, at early times, the leading order effect on the energy comes from the second order contribution in the coupling.  Following this initial spike, the energy starts to leak from system $R$ into system $X$. A good way to model the energy transfer out of the $R$ SYK system is by setting absorbing boundary conditions on the bulk stress tensor along the right boundary \cite{Engelsoy:2016xyb}. In the Schwarzian limit of SYK, the change of energy and the flux of energy at infinity of a massless bulk scalar field theory are related via via
\begin{align}
{d M \over du} = t'^2 T_{t z}
\end{align}
where $t, z$ are bulk Poincare coordinates and $u$ is the boundary time. The energy of $R$ is determined by the boundary trajectory $t(u)$ as
\begin{align}
M = -{\phi_r \over 8 \pi G_N}\{ t, u \}
\end{align}
where $\phi_r$ is the `renormalized' value of the dilaton,  or the coefficient of the growing factor in the bulk dilaton profile as the boundary is approached. 

As discussed in \cite{Engelsoy:2016xyb}, the bulk stress energy due to the Hawking radiation is generated from the conformal anomaly. The relation between the Poincare stress tensor and that due to the black hole is
\begin{align}
T_{y^\pm y^\pm} = (\partial_{y^\pm} x^\pm)^2 T_{x^\pm x^\pm} + {c \over 12} \{ x^\pm, y^\pm  \}
\end{align}
where $x^{\pm} = t \pm z$. The conformal transformation is  $x^{\pm} = x^{\pm}(y^{\pm})$ where $(y^+ + y^-)/2 = u$ is the boundary proper time. $c$ is the central charge of the bulk quantum field theory. The absorbing boundary conditions, in the two coordinate systems, are
\begin{align}
&T_{y^+ y^+} = {c \over 12} \{ x^+, y^+  \}, \ \ T_{y^- y^-} = 0 \\
&T_{x^+ x^+} =0 , \ \ T_{x^- x^-} =  -  (\partial_{y^-} x^-)^{-2 }{c \over 12} \{ x^-, y^-  \}
\end{align}
And therefore, the energy flux on the boundary is
\begin{align}
T_{t z} = {c \over 48} (t')^{-2} \{ t, u \}
\end{align}
This is a negative energy flux falling into the bulk. The energy then satisfies
\begin{align}
{d \over du} \{ t, u \} = - {\pi c G_N  \over 6 \phi_r } \{ t, u \}
\end{align}
This can be solved \cite{Engelsoy:2016xyb} for $t(u)$ to show that the boundary particle receives a (continuous) series of kicks away from the center of the bulk. 

The final precise trajectory of the boundary particle resulting from these two effects depends sensitively on the details of the coupling to the external system. Nevertheless, it is plausible to anticipate that the total effect is to push the boundary particle outwards towards the global AdS$_2$ boundary such that it hits the boundary at an earlier time than the unperturbed situation. We know for sure that it cannot extend beyond this point as that would allow the left SYK  to transmit signals to the right. This would be ensured by the bulk ANEC. It would be interesting to understand the principle on the boundary dual to this\footnote{I thank D. Stanford for discussions on this point.}.

{\bf \flushleft Evaporation (2/2): Tracking the Energy and Entanglement Entropy of $R$}

{\flushleft The energy as a function of $u$ solves to an exponentially decreasing function of time}
\begin{align}
M(u) = M(u_0) e^{- {k }  (u - u_0)}
\end{align}
where $k = {\pi c G_N  \over 6 \phi_r}$ and $u_0$ is the time the absorbing boundary conditions are turned on. Taking $k$ to be small, we can assume the evaporation to be quasi-adiabatic and continue to use the thermodynamic relations between energy, entropy, and temperature. From the energy temperature relation
\begin{align}
M = 2 \pi^2 {\phi_r \over 8 \pi G_N } T^2
\end{align}
the temperature as function of time is found to be
\begin{align}
T(u) = T(u_0) e^{- {k \over 2 }  (u - u_0)}
\end{align}
The thermal entropy as a function of time is 
\begin{align}
S_{th}(u) &= S_0 + 4 \pi^2 {\phi_r \over 8 \pi G_N } T(u) \\
&= S_0 + (S_{th}(u_0) - S_0) e^{-{k \over 2} (u - u_0)}
\end{align}
This thermal entropy is a decreasing function of time and can be thought of as the maximum value of entanglement entropy given the energy $M(u)$. 
\begin{figure}[t]
\begin{center}
\includegraphics[height=5cm]{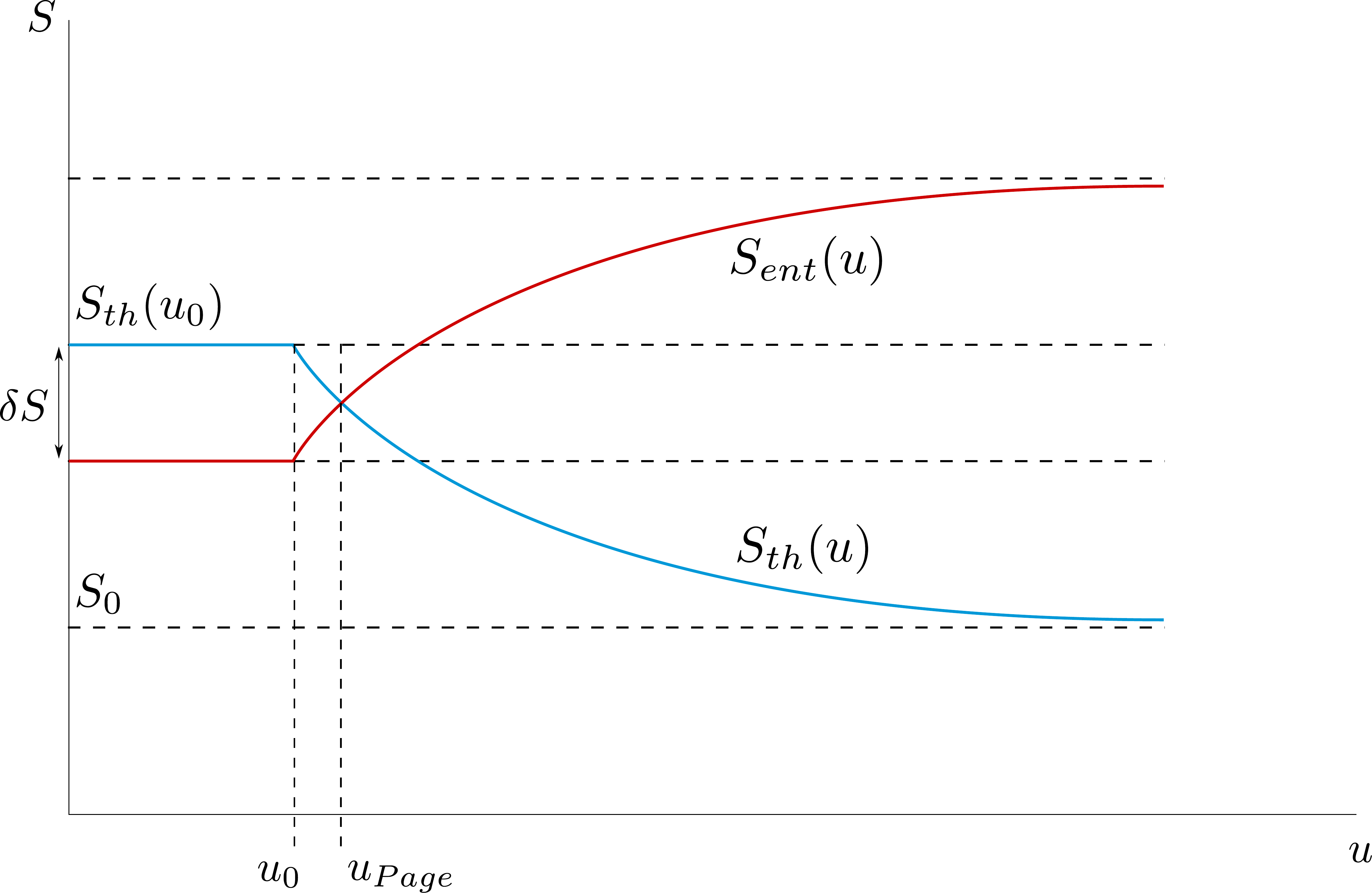}
\caption{Behavior of various entropies as a function of boundary time $u$. The blue curve represents the thermal evolution of the thermal entropy of $R$ computed from the total mass of the black hole, and represents the maximum possible entanglement entropy of $R$. The red curve is the evolution of the entanglement entropy as $R$ leaks energy into $X$, assuming maximal such transfer.}\label{wormholepagecurve}
\end{center}
\end{figure}

The insertion of the message at early times  increases the energy, and therefore the von Neumann entropy of $R$ differs from its thermal entropy by $\delta S$, 
\begin{align}
S_{ent}(u_0) = S_{th}(u_0) - \delta S
\end{align}
As the black hole evaporates, we can model the increase of the entanglement entropy by the decrease of the thermal entropy, which follows from the usual state of Hawking radiation. Again, assuming quasi-adiabatic evaporation we can write
\begin{align}
\Delta S_{ent}(u) = - \Delta S_{th} = - \int_{u_0}^u {dE(u) \over T(u)}  = (S_{th} (u_0) - S_0) \left( 1 - e^{- {k \over 2} (u - u_0)}\right)
\end{align}
where now we have
\begin{align}
S_{ent}(u) &= S_{ent}(u_0) + \Delta S_{ent}(u) \\
&= S_{th}(u_0) - \delta S + (S_{th} (u_0) - S_0) \left( 1 - e^{- {k \over 2} (u - u_0)}\right)
\end{align}

Just as in the standard Hawking evaporation in any dimension, the analogous relation obtained from the usual Hawking process is only trustworthy until around the Page time. This is the time when the thermal entropy of the system is equal to its entanglement entropy,
\begin{align}
S_{th}(u_{Page}) = S_{ent}(u_{Page})
\end{align}
This time is
\begin{align}
u_{Page} &= u_0  - {2 \over k} \ln \left[ 1 - {\delta S \over 2 (S_{th}(u_0) - S_0)}\right] \\
&\approx u_0 + {12 \over c} {\delta S \over T(u_0)}
\end{align}
Which is a short time for $\delta S \sim \mO(1)$. Starting at this time, the thermal entropy of the black hole will be equal to its entanglement entropy. This will be given by the area of the new horizon of the smaller black hole, which therefore becomes the RT surface for the entire system $R$. We expect this to follow since the density matrix of $R$ approaches the thermal state. See figure \ref{wormholepagecurve}.

{\bf \flushleft Deposit the Extracted Energy of $R$ from $X$ into $L$}

{\flushleft After transferring energy from system $R$ to $X$, the state of $LR$ is no longer pure. We gain extra information about the nature of this state by evolving both systems using the original time independent Hamiltonians to the far future and far past. We expect that the right boundary particle will, again, hit the global boundary prematurely. This will result in a new horizon for $R$ that must be its new RT surface since $\rho_R$ is (almost) thermal. This is shown in the third diagram of figure \ref{wormholetransfer}.  The bulk dual of the $LR$ system will then contain two RT surfaces, one for each boundary. These are the surfaces (points)  $A_L$ and $\wt{A}_R$, which do not coincide. }
\begin{figure}[t]
\begin{center}
\includegraphics[height=3.9cm]{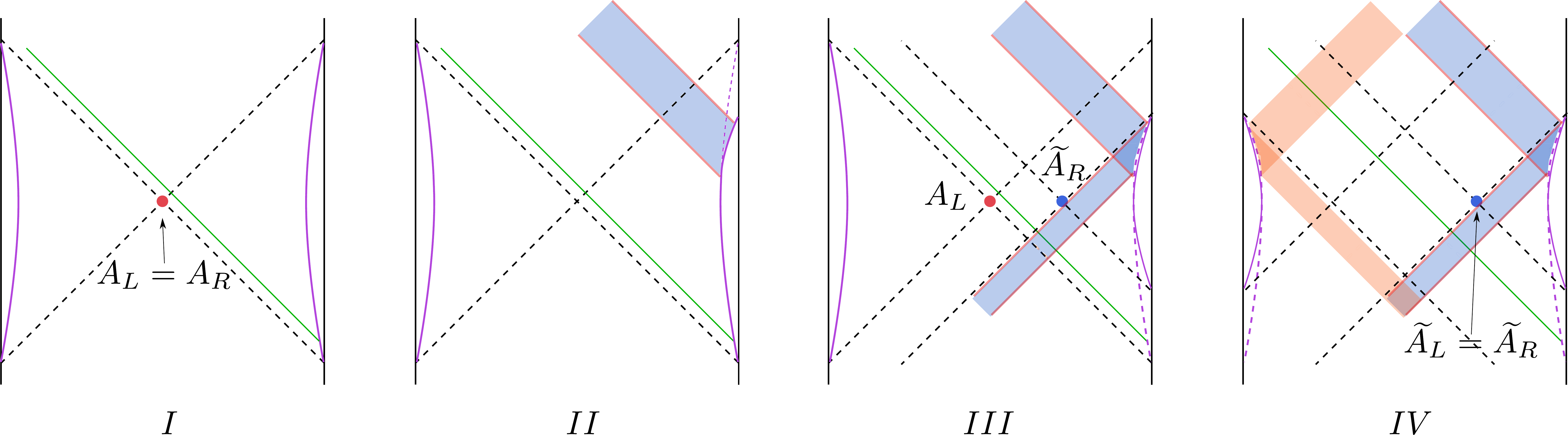}
\caption{The left most diagram ($I$) represents  dropping a message (green) into the TFD of $LR$. The second ($II$) represents the extraction of energy from $R$, represented here by the some initial positive energy (red) followed by negative energy (blue) and some final positive energy due to switching off the interaction (red). In the third diagram ($III$), the right system is evolved using a time independent Hamiltonian. Here the space time has two RT surfaces (points), $A_L$ and $\wt{A}_R$ for the $L$ and $R$ respectively. In the last diagram ($IV$) the RT surfaces coincide again but at the new location of $\wt{A}_R$. The initial message is in the entanglement wedge of the $L$ (but outside its causal wedge).}\label{wormholetransfer}
\end{center}
\end{figure}

Consider now depositing the energy extracted from $R$ into $L$ by means of a unitary acting on $LX$. We imagine that this process can be done in a quasi-adiabatic way on $L$ so as to not modify the bulk picture drastically. This process will not alter the density matrix of $R$, and therefore $\wt{A}_R$ will continue to be its RT surface. After this process is complete, the state of $LR$ will be pure and the RT surfaces of the two boundaries will coincide on $\wt{A}_R$.

Something interesting has just happened. The message sent into $R$ at early times is now contained within the entanglement wedge of $L$, and thereby reconstructable from $L$. We see that the fluidity of the dictionary under entanglement transfer has been rewired the dictionary precisely such that information initially in $R$ is now contained in $L$.  We note the parallel here between this information transfer and the Hayden-Preskill criterion for the decoding a message from the Hawking radiation \cite{Hayden:2007cs}. Here we view $R$ as the black hole and $L$ as the Hawking radiation, and the TFD as the system at the Page time. Then, after throwing in a new message, we have to wait for some extra time for qubits to transfer from $R$ to $L$ until $R$ becomes maximally entangled with $L$, at which point the message can be decoded from $L$.

A similar observation can be made for an evaporating pure large black hole in AdS$_2$, although there is a subtlety due to its ground state entropy. As reviewed earlier, for this black hole to reach the Page time it must build up its entanglement entropy until it coincides with its thermal entropy $S_{th}(T)  = S_0 + C T$, where $C$ is some constant. However, since the rate of evaporation is controlled by the temperature, $\dot{M} \sim T^2$, it is clear that the black hole can only evaporate away a $C T_\mathrm{initial}$ amount of entropy, and therefore will never become thermally entangled with the auxiliary system\footnote{Since the ground state degeneracy is actually lifted by the SYK interactions, the black hole will actually evolve to the Page time provided we wait long enough. However, the Schwrazian description is expected not to be valid for such long times.}. Naively, one might have thought possible to consider large temperatures such that $C T_\mathrm{initial} > 2 S_0$, though this is not the case since then $S_{th}$ would exceed the total number of states of the SYK system; for large $q$,  $S_0 = {N \over \ln 2}  - N \pi^2/4 q^2 + ...$ \cite{Maldacena:2016hyu, PhysRevB.63.134406} and the total number of states in the SYK model is $2^{N/2}$.

This subtlety can be avoided by  adding  another process which continuously adds pure matter into the evaporating black hole so as to keep its temperature constant. The entanglement via the Hawking process will continue to increase and the combined effect of the added matter and the evaporation will push the boundary particle outwards towards the global AdS$_2$ boundary. Just as in the wormhole example, a message thrown in at early times will escape the new RT surface generated by the build up of entanglement. In order to retrieve the information, one can imagine depositing the extracted energy into another SYK system, $L$, and, up to a unitary on this system alone, the dual spacetime can be made to look like a wormhole with the message located in the entanglement wedge of $L$.

\subsection{Comments on Complexity}

\begin{figure}[t]
\begin{center}
\includegraphics[height=5cm]{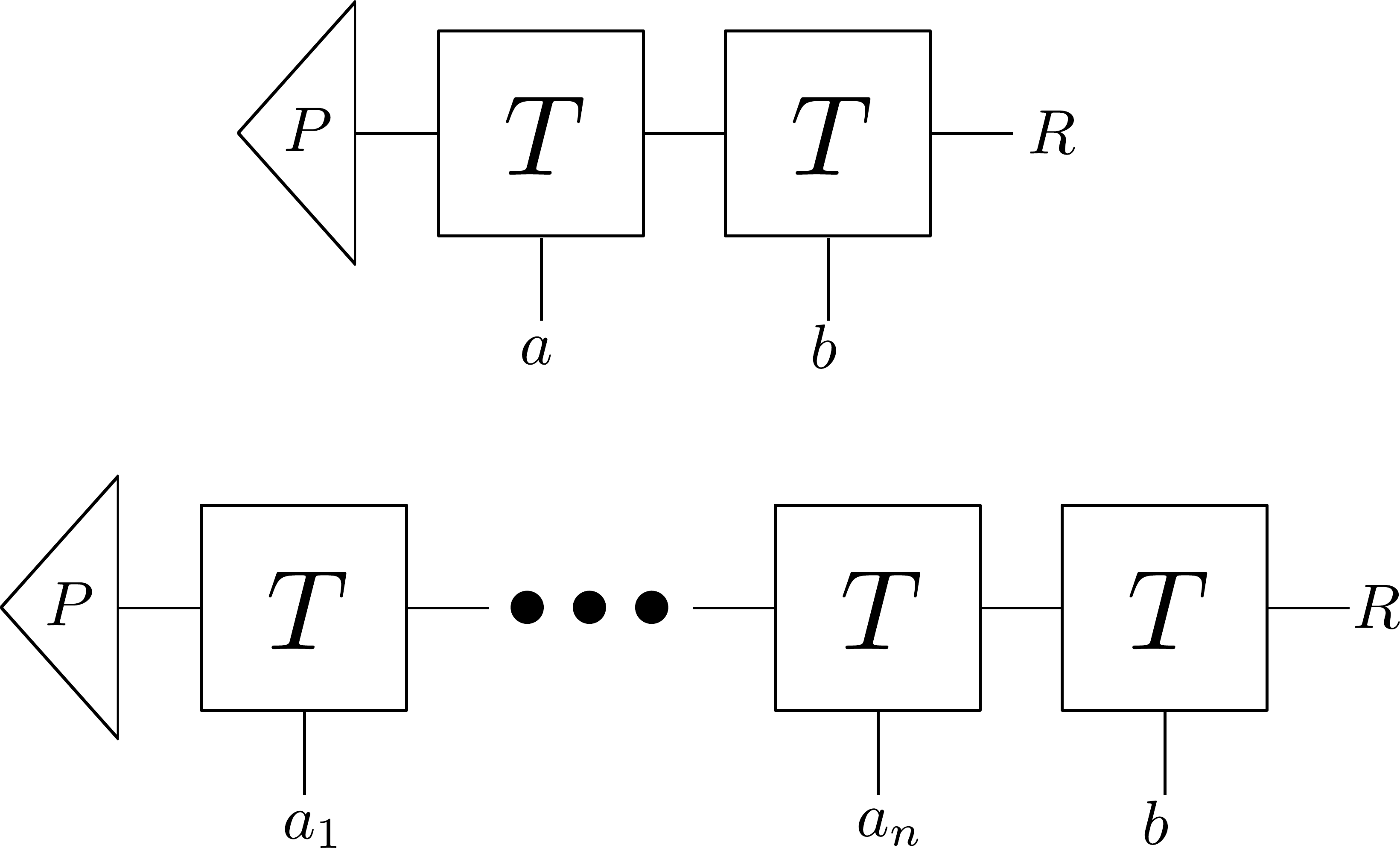}
\caption{The projected tensor network can be viewed as a circuit which prepares the state of the right boundary $R$ assuming some inputs from the bulk $a$ and $b$ and starting from a very simple state $P$ on the left (a product state of spins). The top figure corresponds to projecting on the thermofield double which gives a relatively short tensor network and therefore prepares a low complexity state on $R$. The bottom figure shows a long tensor network which prepares a complex state for large $n$ (the number of tensors).}\label{complexitynetwork}
\end{center}
\end{figure}

We comment here on the connections with the idea of holographic computational complexity \cite{Susskind:2014rva, Stanford:2014jda, Susskind:2014jwa, Susskind:2015toa}. It is interesting to note the difference of complexity between the typical and atypical black hole microstates considered in section \ref{projectedmicrostates} and how that depends on the details of the projector. The projection operator $|B_s \ran \lan B_s |$ projects the left SYK onto a simple product state of spins - a state of low complexity. The resulting state of the right SYK is the Euclidean time evolution of this product state by an amount $\beta/2$
\begin{align}
| B_s^\beta \ran = e^{- {\beta \over 2} H} | B_s \ran
\end{align}
Assuming that $\beta \sim \mO(1)$, we will take this evolved state to be roughly of the same complexity as the product state, that is both are simple\footnote{Euclidean evolution generically takes all states to the vacuum, assuming we evolve for long enough, and therefore tends to a complexity decreasing transformation.}. Moreover, the simplicity of this state can be deduced from the relatively short projected tensor network, which can be viewed as that which prepares the state of $R$ starting with a simple state of $P$.

Projecting a long wormhole supported by left OTO shockwaves waves does not produce such a simple state. As shown in figure \ref{complexitynetwork}, projecting on a long wormhole constructed by $n$ OTO shockwaves results in a  tensor network composed of roughly $n$ tensors. Each tensor is generated by sandwiching the insertion of a local operator, the shockwave, by Hamiltonian evolution of a scrambling time. Taking into account the partial cancellation between the forward and backward time evolution, we can estimate the complexity of each tensor as $N$, the number of spins in SYK, and therefore the total complexity of these states is roughly $n \times N$\footnote{In a previous draft we forgot to take into account the partial cancellation between the forward and backward time evolution and concluded the complexity of each tensor to be $\ln N$. We thank Ying Zhao for pointing this out to us.}\cite{Susskind:2014rva, Stanford:2014jda, Susskind:2014jwa, Susskind:2015toa}.

\begin{figure}[t]
\begin{center}
\includegraphics[height=3.9cm]{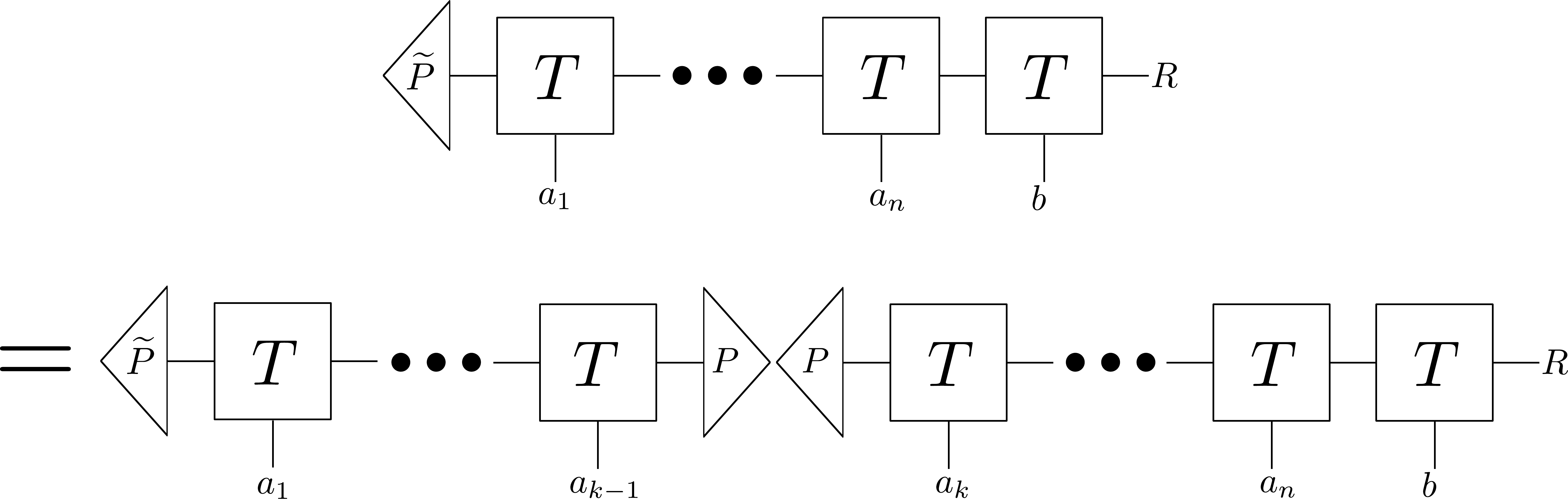}
\caption{Projecting on the left SYK with a complicated but fine tuned projector $\widetilde{P}$ projects a chosen set of bulk internal indices into a simple state $P$. To estimate the complexity of the right SYK $R$ we only need to count the number of tensors between it and the simple projector, which in the presented case is $n-k+1$ tensors. }\label{finetunedcomplexity}
\end{center}
\end{figure}

Counterintuitively, it turns out that projecting on the left SYK with a more complicated state of the spins can result in a simpler state of $R$. The caveat is that this more complicated projector needs to be fine tuned with respect to the bulk tensors. A demonstration of this is to start with a long tensor network where we can fine tune the projector to any given number of bulk tensors and shorten the network to any length that we desire, up to the horizon\footnote{The reason we can't go past the horizon in these tensor network models is that the horizon acts as a bottleneck and thus only the bulk legs  to its left  (internal legs included) map isometrically to the left boundary. Similarly, all bulk legs on the right of the horizon map to the right boundary.}. The key is to find the left projector which projects the bulk internal legs between a given pair of tensors into a simple state. In the situation where the black hole is lengthened by a series of OTO shockwaves, this can be easily achieved by picking a projector which undoes these shockwaves. It would be interesting to understand this in the more general setup where  the lengthening procedure is not so simple.

\section{Conclusion}

The goal of this paper was to understand the dictionary for operators inside the horizon of pure black hole microstates. We considered such microstates in the SYK model which are prepared by starting with the thermofield double, dual to the eternal black hole, and completely projecting out one of the boundaries \cite{Kourkoulou:2017zaj}. The dual of this projection is to insert an end-of-the-world brane near the projected boundary which falls into the black hole. This prepares an overcomplete set of black holes all of which are firewall-free. 

We argued that this preparation process would naively create a firewall at the bifurcation surface. The point was that the entanglement of the bulk fields across the horizon contributes to the entanglement entropy between the two boundaries \cite{Faulkner:2013ana}, and one might worry that breaking the latter would necessarily break the former. This would naively follow  from subregion-subregion duality which says that the density matrix of the bulk fields on the left/right can be recovered from the density matrix of the left/right boundary.

Nevertheless, we showed that the quantum error correction interpretation of the duality avoids this conclusion by giving the AdS/CFT dictionary an interesting kind of fluidity. We showed how the (say left) projection causes a rewiring of the dictionary so as to map bulk operators originally dual to the left boundary to the right. This establishes a dictionary for operators behind the black hole horizon.

This dictionary was found to have the interesting feature that it depends on the projection operator used. This is reminiscient of previous state-dependent proposals of reconstructing the black hole interior \cite{Nomura:2014woa, Nomura:2013gna, Nomura:2012cx, Verlinde:2013uja, Verlinde:2013qya, Verlinde:2012cy, Papadodimas:2015jra, Papadodimas:2013jku, Papadodimas:2012aq}. We comment that a key difference between this work and these proposals is that our construction first considers a bulk state where the nature of the horizon is known and then finds the dictionary, while the previous proposals begin with a boundary equilibrium state and then constructs a subalgebra on the boundary for which the black hole horizon looks smooth.  We show that one can construct an explicit example of a black hole that looks to be completely thermalized (an equilibrium state) from the exterior but which has a `firewall' just behind the horizon which these state-dependent constructions would entirely miss.

We also preformed a preliminary analysis of how to utilize the fluidity of the dictionary to transfer information between two black holes connected by a wormhole. By starting with two SYKs in the TFD state, we showed that extracting energy from one boundary and dumping it in the other causes the RT surface to shift to a new surface of smaller area that is spacelike related to the original RT surface and positioned between it and the boundary. That there should be a new RT surface of smaller area follows because the temperature of the evaporating black hole is decreasing and so must its entanglement entropy, and therefore the original RT surface would suggest a larger entropy than is allowed by thermodynamics. We showed how this implies that a message sent in at early times from the evaporating side ends up within the entanglement wedge of the growing side and thereby becomes reconstructable from the other boundary. This occurs once the evaporating side has reached the Page time, when its entanglement entropy equals its thermal entropy.

We also argued for the analogous effect for the case of starting with a  pure large black hole in AdS and allowing it to evaporate. We can engineer the situation so that the black hole reaches the Page time and becomes thermally mixed with some external system. Once again, a message sent in at early times will be located outside the newly generated RT surface, and therefore will  not be reconstructable on the original system. The principle in play in both of these examples is that a black hole allowed to evaporate via a generic non-fine-tuned process will have its event horizon coincide with its RT surface by the Page time. This presents a new picture of the evolution of the spacetime as a black hole evaporates shown in figure \ref{evapbh}.

\begin{figure}[t]
\begin{center}
\includegraphics[height=3cm]{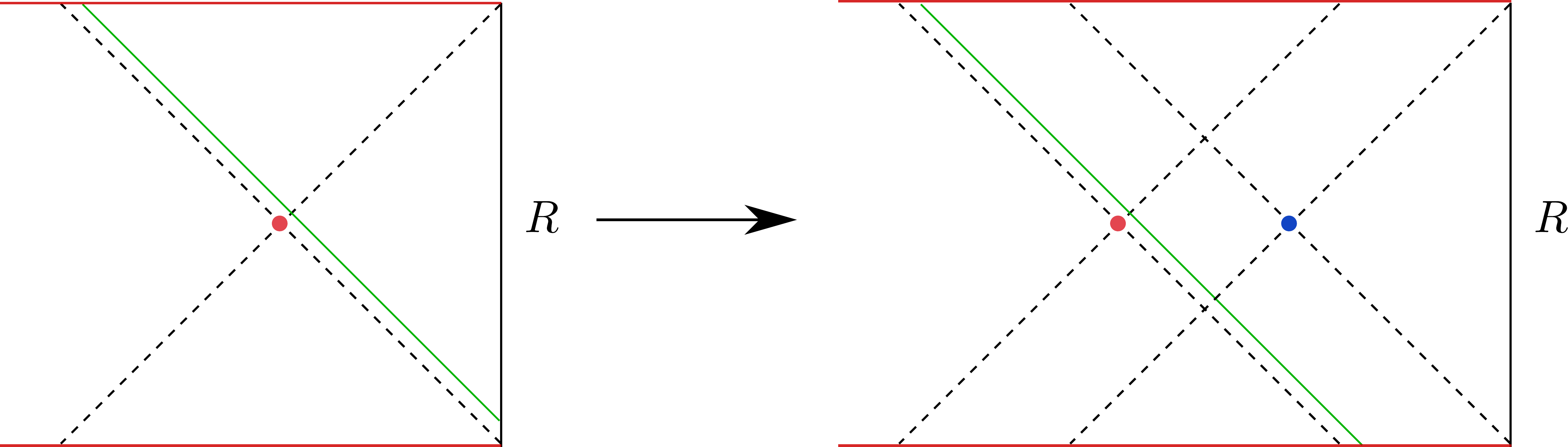}
\caption{The state of an evaporating black hole is dual to a longer and longer wormhole, with ever changing RT surface. The blue dot is the new RT surface for the now smaller black hole.}\label{evapbh}
\end{center}
\end{figure}

We also commented on how the choice of the left projector determines the complexity of the resulting state on the right. By focusing on the tensor network representation of $LR$ system and projecting on $L$, we argued that the resulting tensor network can be viewed as the quantum circuit which prepares the state of $R$ and from which the complexity of the state can be estimated.

\acknowledgments

I would like to thank Juan Maldacena for the many illuminating discussions which led to this work, and Douglas Stanford for being a constant resource along the way. I would also like to thank  William Donnelly, Patrick Hayden, Nima Lashkari, Raghu Mahajan, Geoffrey Penington, David Poulin,  Xiao-liang Qi, Herman Verlinde, and  Ying Zhao for discussions. I are grateful to the KITP Program Quantum Physics of Information (Sep 18 - Dec 15, 2017), where some part of the work was performed. This research was supported in part by the National Science Foundation under Grant No. NSF PHY17-48958.

\appendix

\section{Bulk Particle Gravitational Dressing}
\label{gravdressing}

Let's begin by listing a set of coordinates for AdS$_2$:
\begin{align}
Y^{-1} &= {\cos t \over \sin \sigma} = \cosh r \\
Y^{0} &= {\sin t \over \sin \sigma} = \sinh r \sinh \tau \\
Y^{1} &=  - {\cos \sigma \over \sin \sigma} = \sinh r \cosh \tau
\end{align}
Which are the embedding, global, and Rindler coordinates respectively. The metrics are the following
\begin{align}
ds^2 &= - \left( d Y^{  -1} \right)^2 -  \left( d Y^{ 0 } \right)^2 +  \left( d Y^{  1} \right)^2, \ Y^2 = -1 \\
ds^2 &= {-dt^2 + d \sigma^2 \over \sin^2 \sigma} \\
ds^2 &= dr^2  - \sinh^2 r \ d\tau^2
\end{align}
The trajectory of a massive particle is completely determined by the condition
\begin{align}
Y \cdot Q = 0
\end{align}
For a particle that sits in the center of the bulk the charge is given by
\begin{align}
Q^a_{center} = (0, 0, -m)
\end{align}
Via an SL2 transformation we can push this particle to any massive geodesic. The most general form for the charge of such a particle is
\begin{align}
Q^a = m( \sinh \gamma \sin \theta, \sinh \gamma  \cos \theta, -\cosh \gamma)
\end{align}
where $\gamma$ can be thought of as a rapidity determining the velocity of the particle when it passes the center of the bulk, and $\theta$ controls the shift of the trajectory in bulk global time. The trajectory of the particle in embedding coordinates is
\begin{align}
Y^{-1} &=  \cos \theta \cos T  +\sin\theta \cosh \gamma  \sin T \\
Y^{0} &= -\sin  \theta \cos T + \cos \theta  \cosh \gamma  \sin T  \\
Y^{1} &= -\sinh \gamma \sin T
\end{align}
where $T$ is some time parameter along the trajectory.

The brane of \cite{Kourkoulou:2017zaj} reaches the boundary at bulk time $t = \tau = 0$. In embedding coordinates this is
\begin{align}
Y^{-1} &\rightarrow \infty \\
Y^{0} &= 0  \\
Y^{1} &\rightarrow -\infty
\end{align} 
We then deduce the values of $\theta$ and $\gamma$ to be
\begin{align}
\gamma &\rightarrow \infty \\
\theta &= {\pi \over 2}
\end{align}
and embedding time parameter
\begin{align}
T =  {\pi \over 2}
\end{align}
This is the embedding proper time at which the particle is a maximum $|Y^1|$.

Next we turn to the bulk particle. For a particle to fall into the black hole from the left exterior we have
\begin{align}
0  &<  \gamma < \infty \\
0 &<  \theta < {\pi} 
\end{align}
The first condition ensures that the particle is neither at rest nor falling in at the speed of light. The second ensures the particle falls in from the left exterior by guaranteeing that the largest radial position of the particle (the point where $T = {\pi \over 2}$) occurs within $-{\pi \over 2} < t < {\pi \over 2}$, where $t$ is the bulk global time. At this point $t$ and $\theta$ are related via
\begin{align}
t = {\pi \over 2} - \theta
\end{align}

Now we consider the boundary particle. It's trajectory is fixed by the condition 
\begin{align}
Y \cdot Q = - q
\end{align}
for some $q$. It turns out that $Q$ is proportional to the location of the bifurcation point in embedding coordinates. We have been working in the gauge where the bulk $t = 0$ slice corresponds to the $Y^0 = 0$ slice in embedding coordinates. Therefore, we can ensure the bifurcation point also rests on this slice by picking the charge to be
\begin{align}
Q^a_{R_\partial} = (\sqrt{E}, 0, 0)
\end{align}
where $R_\partial$ is the label for the right boundary particle. Assuming for now that we have the thermofield double, we would require another boundary particle for the left side whose charge must be
\begin{align}
Q^a_{L_\partial} = (-\sqrt{E}, 0, 0)
\end{align}
by the requirement $Q^a_{L_\partial} + Q^a_{R_\partial} = 0$. The energy as measured on the right boundary is simply the square of the charges
\begin{align}
H = -Q^2_{R_\partial}  = E
\end{align}
before considering the brane, we can study how the charges and trajectories of the TFD get modified by the presence of a bulk particle. The bulk particle charge is
\begin{align}
Q^a_{Bp} = m ( \sinh \gamma \sin \theta, \sinh \gamma  \cos \theta, - \cosh \gamma)
\end{align}
And we need to satisfy
\begin{align}
Q^a_{L_\partial} + Q^a_{R_\partial}  + Q^a_{Bp} = 0
\end{align}
There are obviously an infinite number of ways to do this, and they correspond to how the bulk particle is dressed to either boundary. Two interesting cases is when the particle is either entirely dressed to the right:
\begin{align}
Q^a_{R_\partial} &= ( \sqrt{E} - m  \sinh \gamma \sin \theta,  -m \sinh \gamma  \cos \theta,  m \cosh \gamma  ) \\
Q^a_{L_\partial} &= (-\sqrt{E}, 0, 0)
\end{align}
or entirely to the left
\begin{align}
Q^a_{R_\partial} &= (\sqrt{E}, 0, 0) \\
Q^a_{L_\partial} &=  ( -\sqrt{E} - m  \sinh \gamma \sin \theta,  -m \sinh \gamma  \cos \theta,  m \cosh \gamma  ) 
\end{align}

The final thing to show in the context of the TFD is how the bulk matter affects the boundary particle trajectory. I will show that independent of $\theta$, which exterior the bulk particle emerges into, the boundary particle is pushed towards the global boundary and hits it sooner compared to the no bulk particle case. In the case with no bulk particle, the boundary particle trajectory is bounded between the global bulk times 
\begin{align}
-{\pi \over 2} \le t \le {\pi \over 2}
\end{align}
The goal is to show that the modified trajectory is bounded by as
\begin{align}
-{\pi \over 2} < t_- \le  t \le t_+ < {\pi \over 2}
\end{align}
where the $t_-$ and $t_+$ are the new boundary times which the boundary particle approaches. To see this, we have to note that the boundary particle trajectory requires that $Y \cdot Q $ be a constant. The idea is that by taking the boundary limit while keeping this quantity fixed we should find that the global time approaches a certain value. Recall that we can reexpress the embedding coordinates in terms of global coordinates as
\begin{align}
Y^{-1} &= {\cos t \over \sin \sigma} \\
Y^{0} &= {\sin t \over \sin \sigma}\\
Y^{1} &=  - {\cos \sigma \over \sin \sigma} 
\end{align}
All of which diverge at the same rate as $\sigma \rightarrow \pi$. Therefore we find the condition that
\begin{align}
-Q^{-1} \cos t - Q^{0} \sin t + Q^{1} = 0
\end{align}
where 
\begin{align}
Q^a &= ( \sqrt{E} - m  \sinh \gamma \sin \theta,  -m \sinh \gamma  \cos \theta,  m \cosh \gamma  ) 
\end{align}
In the case with no bulk particle we have $Q^a = ( \sqrt{E} ,0,0  ) $ and therefore $t = \pm {\pi \over 2}$ satisfies the constraint. The general solution of the constraint is
\begin{align}
\cos t_{\pm} = {Q^{-1} Q^{1}  \pm |Q^0| \sqrt{H} \over \left( Q^{-1}\right)^2 + \left( Q^{1}\right)^2}
\end{align}
Where $H = -Q^2$. Let's evaluate this for $m \cosh \gamma \ll E$. To first order in $m \cosh \gamma/\sqrt{E}$ we find
\begin{align}
\cos t_{\pm} = {m \over \sqrt{E}} \left[ \cosh \gamma  \pm \sinh \gamma \cos \theta \right] > 0
\end{align}
and therefore $|t_{\pm}| < {\pi \over 2} $.

Finally we consider the case of the brane. To analyze this we need to consider the covering space before the $Z_2$ has been taken \cite{Kourkoulou:2017zaj}. In this space we would have the brane, which we can place at rest in the center, and two boundary particles. The extra bulk particle also needs to be duplicated. The brane and bulk particle charges are
\begin{align}
Q^a_{brane} &= (0 , 0, -\mu) \\
Q^a_{BP_L} &= m ( \sinh \gamma \sin \theta, \sinh \gamma  \cos \theta, - \cosh \gamma)
\\
Q^a_{BP_R} &=  m (- \sinh \gamma \sin \theta, -\sinh \gamma  \cos \theta, - \cosh \gamma)
\end{align}
\begin{figure}[t]
\begin{center}
\includegraphics[height=6cm]{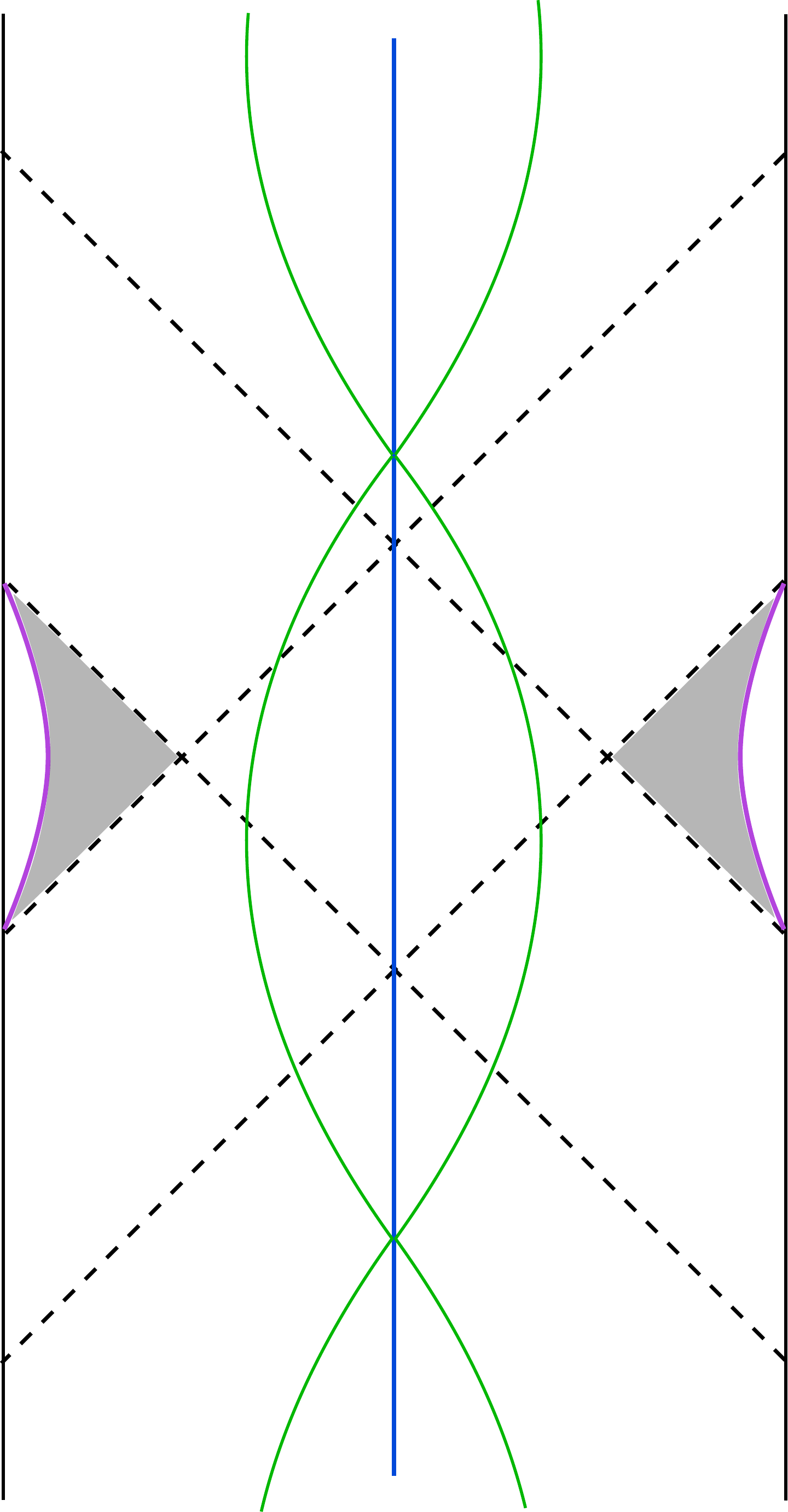}
\caption{The covering space of the EWB geometry with a bulk particle. The covering space needs to be considered to ensure the vanishing of the gauge constraint.}\label{braneparticlez2}
\end{center}
\end{figure}
were the two bulk particle charges are related by $\theta \rightarrow \theta + \pi$. Notice that the sum of the bulk particle charges is
\begin{align}
Q^a_{BP_L}  + Q^a_{BP_R} = (0 , 0, - 2 m \cosh \gamma)
\end{align}
and again, now we have the choice to either dress the bulk particles to the brane or the boundary particles. The analysis of the latter case is identical to what we did previously with the TFD. Dressing them to the brane simply changes the mass of the brane to
\begin{align}
-\mu \rightarrow - \mu  + 2 m \cosh \gamma
\end{align}
Note that the boundary particle trajectories are unchanged since the total charge of the bulk particles plus the brane is equal to that of the previous case with only a brane by itself (with a different mass). Also, the trajectory of the brane is unaltered since it is insensitive to multiplying the charge by an overall factor. Note that brane mass is always decreased independent of $\theta$.

\section{Initial Energy Increase due to an External Coupling}
\label{coupling}

We prove in this appendix that at early times that coupling to another system will generically raise the energy of an initially static system. We imagine coupling a SYK system $R$ to an external system $X$.

Let  $X$ be at arbitrary system in its vacuum, and let's consider modifying the total Hamiltonian with a general coupling 
\begin{align}
H_T =  H_0 + \delta H(t)
\end{align}
where
\begin{align}
H_0 &\equiv H_{SYK}^R + H^X \\
\delta H(t) &\equiv  \lambda A_R(t) B_X(t)
\end{align}
Let's assume that $R$ is in the thermal state. Working in the interaction picture, the evolved state is
\begin{align}
|\Psi(t) \ran_{LRX} = e^{-i H_0 t} {\cal T} e^{- i \int_{t_0}^{t} dt' \delta H(t')} | \beta \ran_{LR}|0 \ran_{X}
\end{align}
We want to compute change in energy of system $R$ immediately after turning on the interaction, which we do so by working to leading order in $\delta t = t - t_0$. We can then Taylor expand the interaction exponent
\begin{align}
|\Psi(t) \ran_{LRX} = e^{-i H_0 t} e^{- i \delta t \delta H(t_0) - i \delta t^2 \partial_{t_0} \delta H(t_0) + ...} | \beta \ran_{LR}|0 \ran_{X}
\end{align}
Now we'll compute the instantaneous energy change for either system. The general computation is
\begin{align}
\delta E_K &= {}_{LRX} \lan \Psi(t)| H^K  |\Psi(t) \ran_{LRX} - {}_X \lan 0 | {}_{LR} \lan \beta | H^K | \beta \ran_{LR} |0 \ran_{X} 
\end{align}
where $K = \{ R, X \}$. The first order in $\delta t$ comes from 
\begin{align}
\delta E^{(1)}_R &= - i \delta t  \  {}_X \lan 0 | {}_{LR} \lan \beta | \ [ H^R_{SYK}, \delta H(t_0) ] \  | \beta \ran_{LR}|0 \ran_{X} \\
&=  - \lambda \delta t  \   \lan \dot{A}(t) \ran  \ \lan B(t) \ran \\
\delta E^{(1)}_X &= - i \delta t  \  {}_X \lan 0 | {}_{LR} \lan \beta | \ [ H^X_{\phi}, \delta H(t_0) ] \  | \beta \ran_{LR}|0 \ran_{X} \\
&=  - \lambda \delta t  \   \lan {A}(t) \ran  \ \lan \dot{B}(t) \ran
\end{align}
where $ \lan A(t) \ran =  \lan \beta | A(t) | \beta \ran$ and $\lan 0 |  B(t) | 0 \ran$. Note that both states of $X$ and $R$ are time translation invariant, and therefore the time derivatives of one point functions must vanish. The same conclusion holds assuming that $R$ is not precisely the thermal state but has thermalized. We conclude that to first order $\delta E = 0$, or atleast to very good approximation. Not that it would have been problematic if this wasn't true since we have the freedom to tune the sign of $\lambda$ so as to reduce the energy of the scalar field theory below that of the vacuum. 

We turn next to the second order contribution in $\delta t$. Expanding, we find
\begin{align}
\delta E^{(2)}_R = {\lambda^2 \over 2} i \lan  [\dot{A}(t_0), A(t_0)] \ran \lan B^2(t_0) \ran \\
\delta E^{(2)}_X = {\lambda^2 \over 2} i  \lan A^2(t_0) \ran \lan [\dot{B}(t_0), B(t_0)] \ran 
\end{align}
Note that we do not have a choice in the overall sign of these contributions to the total energy. We first give a qualitative argument for why these contributions have to be positive, and then prove it rigorously. For either system, these expressions are what one would obtain when turning on a single system Hamiltonian deformation.  For example for the $R$ system we would have
\begin{align}
\delta H(t) = \wt{\lambda} A(t)
\end{align}
where $\wt{\lambda} = \lambda \sqrt{\lan B^2(t_0) \ran}$. Since this amounts to acting with a unitary on either system we can make definite statements about how the energy will change. Since system $X$ begins in the ground state, this must increase the energy. The same conclusion would hold for $R$ in the thermal state, since this state minimizes the expectation value of the Hamiltonian while keeping fixed the entanglement entropy. 

The more careful argument is the following. The commutator can be written as 
\begin{align}
 i \lan  [\dot{A}(t_0), A(t_0)] \ran &= i \Tr\left[e^{- \beta H} \left( \dot{A}(t_0 - i \tau) A(t_0) - A(t_0) \dot{A}(t_0 + i \tau) \right)\right] \Big|_{\tau \rightarrow 0} \\
 &= - \partial_\tau \Tr\left[e^{- \beta H} \left( {A}(t_0 - i \tau) A(t_0) + A(t_0) {A}(t_0 + i \tau) \right)\right] \Big|_{\tau \rightarrow 0} \\ 
 &= - 2 \partial_\tau \Tr \left[ e^{- (\beta - \tau) H} A(t_0) e^{- \tau H} A(t_0)  \right] \Big|_{\tau \rightarrow 0} 
\end{align}
We want to show that 
\begin{align}
\partial_\tau  \Tr \left[ e^{- (\beta - \tau) H} A(t_0) e^{- \tau H} A(t_0)  \right]\Big|_{\tau \rightarrow 0} <0
\end{align}
This is not hard to prove. Consider working out the trace in the energy basis. This gives
\begin{align}
-\sum_{n m} |A_{nm}|^2 e^{- \beta E_n - \tau (E_m - E_n)} (E_m - E_n)
\end{align}
which after noting that $|A_{nm}|$ is symmetric in $n$ and $m$ can be re-expressed as
\begin{align}
-2 \sum_{n >m}|A_{nm}|^2 e^{\beta (E_n + E_m)/2} (E_m - E_n) \sinh\left[ \left({\beta \over 2} - \tau\right) \left( E_m - E_n\right) \right]
\end{align}
which is indeed negative for $\tau = 0$. This shows that 
\begin{align}
 i \lan  [\dot{A}(t_0), A(t_0)] \ran > 0
\end{align}
in the thermal state. The same conclusion would hold for a state that has thermalized and for a simple operator $A$.

\bibliographystyle{jhep}
\bibliography{bibliography}
\end{document}